\newcommand{\corcorrecao}{black}
\newtheorem{lemma}{Lemma}
\newtheorem{definition}{Definition}
\newtheorem{corollary}{Corollary}
\newcommand{\parametro}{g}
\newcommand{\vbsize}{\mathcal{M}}
\newcommand{\timesample}{Y}
\newcommand{\freqsample}{X}
\def\BibTeX{{\rm B\kern-.05em{\sc i\kern-.025em b}\kern-.08em
    T\kern-.1667em\lower.7ex\hbox{E}\kern-.125emX}}
\begin{document}
\history{Date of publication September 28, 2022.}
\doi{10.1109/ACCESS.2022.3210519}

\title{Is FFT Fast Enough for Beyond 5G Communications? A Throughput-\\Complexity Analysis for OFDM Signals}

\author{\uppercase{Saulo Queiroz}\authorrefmark{1}, 
\uppercase{Joao P. Vilela\authorrefmark{2}, and Edmundo Monteiro}\authorrefmark{3}
\IEEEmembership{Senior Member, IEEE}
}
\address[1]{Academic Department of Informatics,
          Federal University of Techonology (UTFPR), Ponta Grossa, PR, Brazil.
         (e-mail: sauloqueiroz@utfpr.edu.br)}
\address[2]{CISUC, CRACS/INESCTEC, and Dep. of Computer Science, Faculty of Sciences, University of Porto, rua do Campo Alegre s/n, 4169-007 Porto, Portugal (e-mail: jvilela@fc.up.pt)}
\address[3]{CISUC and Department of Informatics Engineering,
University of Coimbra, Portugal 
(e-mail: edmundo@dei.uc.pt)}
\tfootnote{This work is partially supported by the European Regional Development Fund (FEDER), through the Regional Operational Programme of Centre (CENTRO 2020) of the Portugal 2020 framework and FCT under the MIT Portugal Program [Project SNOB-5G with Nr. 045929(CENTRO-01-0247-FEDER-045929)] and the project POWER (grant number POCI-01-0247-FEDER-070365), co-financed by the European Regional Development Fund (FEDER), through Portugal 2020 (PT2020), and by the Competitiveness and Internationalization Operational Programme (COMPETE 2020).}

\markboth
{Is FFT Fast Enough for Beyond 5G Communications? A Throughput-\\Complexity Analysis for OFDM Signals}
{Is FFT Fast Enough for Beyond 5G Communications? A Throughput-\\Complexity Analysis for OFDM Signals}

\corresp{Corresponding author: Saulo Queiroz (e-mail: sauloqueiroz@utfpr.edu.br).}

\begin{abstract}
In this paper, we study the impact of computational complexity on the throughput limits 
of the {\color{black}fast Fourier transform (FFT)} algorithm for {\color{black}orthogonal frequency division multiplexing
(OFDM)} waveforms. 
Based on the spectro-computational {\color{\corcorrecao}complexity} (SC) 
analysis, {\color{\corcorrecao} we verify that the complexity of an $N$-point 
FFT grows faster than the number of bits in the OFDM symbol.} 
Thus, we show that FFT nullifies the OFDM throughput on $N$ unless the
$N$-point discrete Fourier transform (DFT) problem verifies as $\Omega(N)$, which remains a ``fascinating''
open question in theoretical computer science. Also, because FFT 
demands $N$ to be a power of two $2^i$ ($i>0$), the spectrum widening leads 
to an exponential complexity on $i$, i.e. $O(2^ii)$. To overcome these 
limitations, {\color{\corcorrecao} we consider the alternative 
frequency-time transform formulation of vector OFDM (V-OFDM), 
in which an $N$-point FFT is replaced by $N/L$ ($L$$>$$0$) smaller 
{\color{\corcorrecao}$L$-point} FFTs to mitigate the cyclic prefix overhead 
of OFDM. Building on that, we replace FFT by the straightforward DFT algorithm to 
release the V-OFDM parameters from growing as powers of two and to benefit 
from flexible numerology (e.g., $L=3$, $N=156$). Besides, by setting 
$L$ to $\Theta(1)$, the resulting solution can run linearly on $N$
(rather than exponentially on $i$) while sustaining a non null throughput
as $N$ grows.
}
\end{abstract}

\begin{keywords}
Fast Fourier Transform, Computational Complexity, Throughput, Spectro-Computational Analysis, 
Vector OFDM, Parameterized Complexity.
\end{keywords}

\titlepgskip=-15pt

\maketitle
\section{Introduction}  
\IEEEPARstart{T}{he} fast Fourier transform (FFT) algorithm~\cite{CooleyTukey-1965} is among
the top-ten most relevant algorithm of the 20th century~\cite{top10algorithms-2005}.
FFT outperforms the $O(N^2)$ straightforward discrete Fourier transform 
(DFT) algorithm by {\color{\corcorrecao}performing an
$N$-point frequency-time transform} in  $O(N\log_2 N)$ time 
complexity\footnote{{\color{\corcorrecao}
As in the computational complexity theory, by ``time'' or ``time complexity'', we mean 
``number of computational instructions'' unless otherwise stated. The term is interchangeable 
with wall-clock runtime, provided the wall-clock time taken by each instruction on a particular
computational apparatus.}}. 
Particularly for signal communication processing, FFT revolutionized 
the {\color{black} design of an $N$-subcarrier OFDM signal by replacing a bank 
of $N$ synchronized analog} oscillators by a single digital chip that 
requires a single oscillator. 
Ever since, {\color{\corcorrecao}FFT has been employed as the frequency/time transform 
algorithm by several multicarrier and single carrier 
waveforms~\cite{gerzaguet-dspcompcomplexity-2017}}.

\subsection{Motivation and Problem Statement}
{\color{black}
In recent discussions~\cite{madanayake-afft-2020},~\cite{rappaport-100GHz-6g-2019}}, 
scholars have doubted the performance abilities of FFT to modulate signals 
in the future sixth generation~(6G) of wireless networks. 
They point that 6G waveforms are expected to leverage {\color{\corcorrecao}data rate} to 
the order of Terabit per second (Tbit/s) to {\color{\corcorrecao} improve the mobile 
broadband service of 5G}. This envisions signals operating in 
the so-called terahertz (THz) frequency band of the electromagnetic spectrum 
i.e., $0.1$-$10$ $\times 10^{12}$~Hz~\cite{zhao-6g-2019}. 
To alleviate the power consumption implied by the FFT complexity {\color{black}
under wide signals}, Rappaport et al.~\cite{rappaport-100GHz-6g-2019} suggest to 
give up the ``perfect fidelity'' of the DFT computation on behalf of (slightly) 
more error-prone approximation algorithms in portable devices~\cite{madanayake-afft-2020}. 
{\color{black}In other words, the throughput gains envisioned by extremely wideband services
of future wireless networks can lead the computational complexity of FFT to prohibitive 
levels in some practical scenarios. From this, it does result a natural trade-off between 
throughput and complexity that might concern the design of beyond 5G wireless networks.}

{\color{black} In this work, we consider the throughput-complexity trade-off of FFT in the context
of OFDM-based waveforms} and reason about the throughput limit of a DFT algorithm 
considering its computational complexity {\color{black} as the number of points grows}.
In summary, \emph{{\color{black}we place the following questions:
can the FFT complexity impose a  bottleneck that nullifies the OFDM throughput 
as $N$ grows?  Besides, what should be the lower bound asymptotic complexity required 
to sustain a non null throughput of the DFT problem in OFDM?}}

\subsection{Contributions}
 We study the impact of computational
complexity on the throughput limits of different
DFT algorithms (such as FFT) in the context of OFDM-based waveforms.
The spectro-computational (SC) analysis~\cite{queiroz-wcl-19},\cite{queiroz-access-2020},~\cite{queiroz-cost-ixs-19}
is employed to calculate the SC throughput of different DFT 
algorithms. The SC throughput $SC(N)=B(N)/T(N)$ of a signal processing 
algorithm stands for the computational complexity time $T(N)$ spent to modulate
$B(N)$ bits into an $N$-subcarrier symbol. In the SC analysis, a
signal algorithm is asymptotically scalable if its throughput
does not nullify as the spectrum grows, i.e., $\lim_{N\to\infty}SC(N)>0$. 
{\color{black}
Our contributions can be classified into two categories.
First,  we report novel asymptotic limits 
relating complexity and throughput of FFT in the context 
of OFDM signals. Prior works have considered 
the impact of asymptotic complexity on
aspects other than throughput such as DFT silicon area~\cite{thompson-dftarea-79},~\cite{ThompsonBook}
or information loss of computation~\cite{ailon-fft-2015}.
{\color{\corcorrecao} Although complexity and throughput 
have been widely recognized as key performance indicators
for future wireless networks~\cite{zhao-6g-2019}, 
to the best of our knowledge, a formal answer
to our question still lacks in the literature}. In summary,
we demonstrated the following novel asymptotic laws for FFT in
OFDM:
}
\begin{itemize}
  \item The throughput of FFT nullifies on $N$ in
OFDM. Besides, considering that FFT imposes
the number of points to grow as a power of two $N=2^i$ ($i>0$),
{\color{\corcorrecao} the spectrum widening causes FFT to run 
exponentially on $i$, i.e., $O(2^ii)$};
  \item No exact DFT algorithm scales throughput on 
$N$ (given a constellation size $M$) unless the asymptotic complexity lower bound 
of the DFT problem verifies as $\Omega(N)$. Currently, this DFT lower bound remains an 
open ``fascinating'' question in field of computational complexity~\cite{fftlowerbound-2009};
  \item We formalize what we refer to as the sampling-complexity (or the Nyquist-Fourier) trade-off.  
{\color{\corcorrecao}This trade-off} accounts for the fact that the DFT complexity 
increases as the Nyquist interval decreases, {\color{\corcorrecao} causing the
$N$-point DFT computation to become a bottleneck for the sampling task.
Considering OFDM symbols of fixed duration, this trade-off cannot be
solved since it demands a lower bound of $\Omega(1)$ for the DFT problem.}
\end{itemize}

{\color{black}
In our second set of contributions, we consider alternative forms of
frequency-time transform computation under which the resulting complexity $T(N)$
meets the fundamental criterion of the SC analysis $\lim_{N\to\infty}SC(N)>0$, 
i.e., complexity does not nullify throughput as $N$ grows. We disclose how to 
meet such criterion for vector OFDM (V-OFDM)~\cite{vofdm-2001},
a variant of OFDM that replaces an $N$-point FFT by $N/L$ ($L>0$) smaller FFTs 
to mitigate the cyclic prefix overhead of OFDM.
Our contribution results from the fact that other V-OFDM-based 
works e.g.,~\cite{zhang-mimovofdm-access-2020},~\cite{vofdmim-2017},~\cite{cheng-vofdmrayleigh-2011},~\cite{6222369}  
care on aspects other than the throughput-complexity trade-off for the
DFT problem. In this sense, we report the following contributions:
}
\begin{itemize}
  \item {\color{\corcorrecao} We present the SC analysis of the 
frequency-time transform problem in V-OFDM. In this context,
we replace FFT by DFT to relax the power of two constraint on $N$ and
to provide V-OFDM with flexible numerology (e.g. $L=3$, $N=156$). Besides, 
we apply the parameterized complexity technique~\cite{2464827} 
on the DFT algorithm, getting what we refer to as the parameterized DFT (PDFT) 
algorithm. By setting $L=\Theta(1)$, PDFT can run linearly on $N$ rather than
exponentially on $i$ while sustaining a non null throughput as $N$ grows;}
   \item We identify the most efficient setup of V-OFDM to mitigate
sampling-complexity trade-off. By setting $L=2$, PDFT becomes multiplierless
requiring only $O(N)$ complex sums. Although this does not solve the sampling-complexity
trade-off, the most expensive computational instruction of DFT is eliminated and
the additions can be performed in parallel. 
\end{itemize}

The remainder of this work is organized as follows.
In Section~\ref{sec:dftlimits}, we present a joint 
throughput-complexity analysis of the DFT problem
and the FFT algorithm. We also enunciate the 
sampling-complexity (Nyquist-Fourier) trade-off, based
on which we calculate the minimum asymptotic complexity
required for a DFT algorithm to meet the
sampling interval of {\color{black}digital-to-analog/analog-to-digital
(DAC/ADC)} converters in OFDM-based waveforms. In Section~\ref{sec:pdft}, 
we present the PDFT algorithm. In Section~\ref{sec:pdftvalidation},
we present a comparative performance among FFT and the
PDFT algorithm and validate our theoretical results.
In Section~\ref{sec:conclusionpdft}, we summarize the
the work.

\section{Spectro-Computational Asymptotic Analysis}\label{sec:dftlimits}
In this section, we study the joint capacity-complexity asymptotic
limit of the DFT problem by means of the SC analysis (subsection~\ref{subsec:dftlimits}). 
Then, in Subsection~\ref{subsec:fftlimits}, we specialize the analysis to the FFT 
algorithm to respond whether it is sufficiently fast to process signals 
of increasing throughput. Finally, in Subsection~\ref{subsec:sctradeoff}, we
relate the DFT complexity with the Nyquist sampling interval and
introduce what we refer to as the sampling-complexity (Nyquist-Fourier) trade-off.
The notation and symbols used throughout the paper are summarized in Table~\ref{tb:pdftsymbols}.

\begin{table}[]
\centering
\caption{Notation and symbols.\label{tb:pdftsymbols}}
\begin{tabular}{|c|l|}
\hline
Symbol      & Usage \\\hline
$N$         & Number of subcarriers (DFT points)\\ \hline
$\Delta f$  & {\color{\corcorrecao}Subcarrier spacing (Hz)} \\ \hline
$W$         & {\color{\corcorrecao}Signal bandwidth (Hz)} \\ \hline
$M$         & {\color{\corcorrecao}Size of constellation diagram}  \\ \hline
$B(N)$     & Bits per $N$-subcarrier symbol \\ \hline
$T_{DFT}(N)$     & {\color{\corcorrecao}Complexity of a given $N$-point DFT algorithm}\\ \hline
$SC(N)$     & Throughput of an algorithm under an $N$-size input\\ \hline
$T_{NYQ}$     & Inter sample time interval (seconds)\\ \hline
$j$     & Imaginary unity\\ \hline
$\mathbf{\freqsample}$  & Complex frequency domain symbol  \\ \hline
$\mathbf{\timesample}$  & Complex time domain symbol  \\ \hline
$\freqsample_k$     & $k$-th complex frequency domain sample \\ \hline
$\timesample_t$     & $t$-th complex time domain sample \\ \hline
$\mathbf{x}_l$     & $l$-th frequency domain vector block \\ \hline
$\mathbf{y}_q$     & $q$-th time domain vector block \\ \hline
$L$         & Number of vector blocks and DFT size  \\ \hline
$\mathcal{M}$ & Length of vector blocks \\ \hline
$\Omega(f)$ & \begin{tabular}[c]{@{}l@{}}Order of growth asymptotically equal or\\ larger than $f$\end{tabular}  \\ \hline
$O(f)$      & \begin{tabular}[c]{@{}l@{}}Order of growth asymptotically equal or\\ smaller than $f$\end{tabular} \\ \hline
$\Theta(f)$ & Order of growth asymptotically equal to $f$\\ \hline
$[\cdot]^T$  & transpose of the matrix $[\cdot]$\\ \hline
\end{tabular}
\end{table}

\subsection{Capacity-Complexity Limits of the DFT Computation}\label{subsec:dftlimits}
The IDFT at an OFDM transmitter consists in computing the complex 
discrete time samples $\timesample_t$, $t=0,1,\cdots,N-1$ of a symbol given the
complex samples $\freqsample_k$ that modulate the baseband frequencies $k=0,1,\cdots,N-1$. 
According to the Fourier analysis, such relationship is given by 
$\timesample_{t}=\sum_{k=0}^{N-1}\freqsample_{k}e^{j2\pi kt/N}$, $t=0,1,\cdots,N-1$,
in which {\color{\corcorrecao}$j=\sqrt{-1}$}.
At the receiver, a DFT algorithm takes the signal back from time to the
frequency domain by performing 
$\freqsample_{k}=\sum_{t=0}^{N-1}\timesample_{t}e^{-j2\pi kt/N}$, $k=0,1,\cdots,N-1$.
Since in each transform both $k$ and $t$ vary from $0$ to $N-1$, it
is easy to see that the resulting asymptotic complexity $T_{DFT}(N)$ is $O(N^2)$.
The FFT algorithm improves this complexity to $O(N\log_2 N)$ at the constraint 
of $N=2^i$, for some $i>0$. For this reason, the number of FFT points (hence, 
channel width) at least doubles across novel wireless network standards targeting
faster data rates, e.g., IEEE 802.11ax~\cite{80211ax}.
For more details about the theory of DFT and FFT, 
please refer to \cite{kumar-dftsurvey-2019}.

The SC analysis proposed in~\cite{queiroz-wcl-19},~\cite{queiroz-access-2020} 
defines the SC throughput $SC(N)$ bits/time
of an $N$-subcarrier signal processing algorithm
as the ratio between the amount of useful transmission bits $B(N)$ carried by the 
symbol and {\color{\corcorrecao}the overall} computational complexity $T(N)$  {\color{\corcorrecao}required} 
to build the symbol.
For a constellation diagram of size $M=2^p$ (for some $p>0$), each subcarrier 
modulates $\log_2 M$ bits. Thus, in OFDM DFT is performed on a symbol that carries
a total of $B(N)=N\log_2 M$ useful bits.
As usual in the analysis of algorithms, the complexity accounts for the most recurrent 
and expensive computational instruction. Thus, without loss of generality, let now 
$T_{DFT}(N)$ denote the asymptotic number of complex multiplications performed by 
a given DFT algorithm. Let us also denote $T_{MULT}(d)$ as the computational complexity 
to perform a single complex multiplication between two $d$-bit complex numbers. 
For OFDM  $d=\log_2 M$, then the SC throughput of a DFT algorithm in bits/computational 
time is,
\begin{eqnarray}
  SC_{DFT}(N) &=& \frac{N\log_2 M}{T_{MULT}(\log_2 M)T_{DFT}(N)}\label{sc:dftmulti}
\end{eqnarray}
We assume that the channel SNR does not grow arbitrarily on
$N$, meaning that the number of points in the constellation diagram 
is bounded by a constant,  i.e., $M=\Theta(1)$.
Hence, $N$ is the unique variable of our asymptotic 
analysis{\color{\corcorrecao}, i.e., $N\to\infty$\footnote{
Note that this technicality of the asymptotic analysis does not mean the signal
bandwidth is unlimited. Instead, it will enable us to verify whether the
SC throughput nullifies on $N$ through a benefit-cost ratio analysis.}.}
Thus, there exist
constants $d>0$ and $c>0$ such that the SC throughput in (\ref{sc:dftmulti})
rewrites as,
\begin{eqnarray}
  SC_{DFT}(N) &=& \frac{Nd}{T_{DFT}(N)c} \label{sc:dft}
\end{eqnarray}  
Now, proceeding the asymptotic analysis on $N$ and assuming the implied limit exists,
all constants can be neglected and  the following asymptotic SC throughput
results,
\begin{eqnarray}
  SC_{DFT}(N) &=& \lim_{N\to\infty}\frac{N}{T_{DFT}(N)} \label{sc:dftinfty}
\end{eqnarray}

As $N$ grows, both the number of bits of the OFDM symbol as well as the DFT complexity to 
assemble it grows accordingly. The condition for the scalability of a DFT
algorithm as $N$ grows is given in Def.~\ref{def:scadft}.
\begin{definition}[DFT Capacity-Complexity Scalability Condition]\label{def:scadft}
The throughput of a DFT OFDM algorithm of complexity $T_{DFT}(N)$ 
is not scalable unless the inequality~(\ref{eq:conditiondft}) does hold.
\begin{eqnarray}
\lim_{N\to\infty} \frac{N}{T_{DFT}(N)} &>& 0 \label{eq:conditiondft}
\end{eqnarray}
\end{definition}

Based on the fundamental condition of Def.~\ref{def:scadft}, 
the required $T_{DFT}(N)$ time complexity upper-bound is summarized 
in Lemma~\ref{lem:dftrequiredcomplexity}.
\begin{lemma}[Required DFT Asymptotic Complexity for SNR-bounded Channels]~\label{lem:dftrequiredcomplexity}
The throughput of a given $N$-point DFT algorithm 
employed to perform the frequency-time transform 
of a $(d\cdot N)$-bit OFDM waveform ($d>0$) nullifies on $N$ unless 
it runs in $\Theta(N)$ time complexity.
\end{lemma}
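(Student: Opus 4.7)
The plan is to convert the non-nullifying-throughput condition from Def.~\ref{def:scadft} into matching asymptotic upper and lower bounds on $T_{DFT}(N)$, and then observe that the two bounds pin the complexity down to $\Theta(N)$.

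First, I would start from the SC throughput expression (\ref{sc:dftinfty}) and the scalability condition $\lim_{N\to\infty} N/T_{DFT}(N) > 0$. I would argue that a strictly positive limit holds if and only if $T_{DFT}(N) = O(N)$: if $T_{DFT}(N)$ were $\omega(N)$ (i.e., grew strictly faster than $N$), then the ratio $N/T_{DFT}(N)$ would tend to $0$, nullifying the throughput and contradicting Def.~\ref{def:scadft}. This gives the necessary upper bound $T_{DFT}(N) = O(N)$ for any DFT algorithm that carries an $(d\cdot N)$-bit OFDM payload with $d=\Theta(1)$.

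Second, I would establish the matching lower bound $T_{DFT}(N) = \Omega(N)$ by a trivial output-size argument: any $N$-point DFT algorithm must produce $N$ frequency-domain samples $\freqsample_k$, $k=0,\ldots,N-1$, each of which requires at least one computational instruction to be written. Hence any correct DFT algorithm is $\Omega(N)$ in the machine model assumed throughout the paper.

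Finally, combining $T_{DFT}(N) = O(N)$ (necessary for non-null throughput under $M=\Theta(1)$) with $T_{DFT}(N) = \Omega(N)$ (necessary for correctness) yields $T_{DFT}(N) = \Theta(N)$, completing the proof. The argument is essentially a benefit-cost ratio calculation, so I do not anticipate a technical obstacle; the only subtle point is being careful that the ``unless'' in the statement is read as a strictly necessary condition, so that the upper-bound direction of the equivalence is all that must be extracted from Def.~\ref{def:scadft}, with the lower-bound direction coming for free from the output size.
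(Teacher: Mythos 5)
Your proposal is correct and follows essentially the same route as the paper: derive $T_{DFT}(N)=O(N)$ from the non-nullification condition under $M=\Theta(1)$, then combine with a trivial linear lower bound to conclude $\Theta(N)$. The only cosmetic difference is that you justify the $\Omega(N)$ bound by the cost of writing the $N$ outputs, whereas the paper invokes the cost of reading the $N$ inputs (citing the DFT lower-bound literature); both are equivalent for this purpose.
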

\begin{proof}
Let $M$ be the length of the largest constellation diagram 
at which the bit error rate becomes negligible. Assuming the
channel SNR does not grow arbitrarily, $M$ is bounded by
a constant (i.e., $M=\Theta(1)$), so the number of bits $d=\log_2 M$
per subcarrier. Thus, the computational complexity to multiply 
two $d$-bit complex constellation points is bounded 
accordingly, resulting in a constant $c$.
Therefore, the complexity $T_{DFT}(N)$ required
to process $Nd$ bits ensuring the throughput of the 
DFT algorithm does not nullify as $N$ grows (i.e., remains
greater or equal than a non-null constant $k$) is given by:
\begin{eqnarray}
\lim_{N\to\infty} \frac{Nd}{cT_{DFT}(N)} &\geq&  k > 0 \nonumber\\
\lim_{N\to\infty} \frac{Nd}{cT_{DFT}(N)} &\geq& \lim_{N\to\infty} k \nonumber\\
\lim_{N\to\infty}\frac{Nd}{ck} &\geq& \lim_{N\to\infty} T_{DFT}(N) \nonumber\\
T_{DFT}(N) &=& O(N) \label{eq:conditiondftasymptotic}
\end{eqnarray}
Considering the $O(N)$ upper bound of Ineq.~\ref{eq:conditiondftasymptotic}
along with the fact that no $O(N)$ storage complexity DFT algorithm 
can run below $\Omega(N)$ steps~\cite{fftlowerbound-2009} -- assuming that 
at least $N$ computational instructions are needed to read the input --
a non-null throughput DFT algorithm must run in $\Theta(N)$
time complexity.
\end{proof}

If one relaxes the assumption $M=\Theta(1)$ by considering $M$ can grow
faster or as fast as $N$ (i.e., channels of  unbounded SNR), the required 
$T_{DFT}(N)$ complexity upper bound can be calculated from Eq.~\ref{sc:dftmulti} 
by considering either $M=\Omega(N)$ or $M=\Theta(N)$, respectively.
In this case, the overall asymptotic complexity  (so the algorithm throughput)
also depends on the multiplication algorithm, whose complexity depends 
on the number of bits per subcarrier $d=\log_2 M$. Considering, as a matter
of example, $N=\Theta(M)$ and the  $O(d^{\log_2 3})$ complexity of the Karatsuba 
multiplication algorithm~\cite{Karatsuba-62}, the DFT complexity upper bound would 
be nearly $O(N/\log_2^{0.585} N)$. 

\subsection{Spectro-Computational Analysis of the FFT Algorithm}\label{subsec:fftlimits}
The FFT algorithm~\cite{CooleyTukey-1965} outperforms the 
$O(N^2)$ straightforward DFT algorithm by running in $O(N\log_2 N)$ time 
complexity. FFT performs $O(N)$ computational instructions to decrease 
an $N$-point DFT problem into two $N/2$-DFTs per iteration (or recursive calling).
This is possible by noting that the frequency samples $\freqsample_k$ and
$\freqsample_{k+N/2}$ ($k=0,1,\cdots,N/2-1)$ can be computed from the same 
following $N/2$-point DFTs:
\begin{eqnarray}
E_k&=&\sum_{t=0}^{N/2-1}\timesample_{2t}e^{\frac{-j2\pi tk}{N/2}} \label{eq:evendft}\\
O_k&=&e^{-j2\pi k/N}\sum_{t=0}^{N/2-1}\timesample_{2t+1} e^{\frac{-j2\pi tk}{N/2}} \label{eq:odddft}
\end{eqnarray}
 In other words, $E_k$ (Eq.~\ref{eq:evendft}) and $O_k$ (Eq.~\ref{eq:odddft}) 
are the $N/2$-point DFT taken from the even-indexed and odd-indexed time samples 
of the $N$-point input array, respectively. Based on them, the Danielson--Lanczos 
lemma shows that,
\begin{eqnarray}
\freqsample_k&=&E_k + e^{-j2\pi k/N}O_k \label{eq:evendftxk}\\
\freqsample_{k+N/2}&=& E_k - e^{-j2\pi k/N}O_k\label{eq:odddftxkn2}
\end{eqnarray}

This way, $N/2$ iterations are necessary to compute
$\freqsample_k$ and $\freqsample_{k+N/2}$, yielding a total of $O(N)$
computations. Each of these iterations needs to solve both
the $N/2$-point DFTs $E_k$ and $O_k$. Denoting  {\color{\corcorrecao}$T_{DFT}(N)$} as 
the complexity of an $N$-point FFT and applying the  Danielson--Lanczos 
lemma recursively, the overall complexity can be given
by the  recurrence relation {\color{\corcorrecao}$T_{DFT}(N)=O(N) + 2T_{DFT}(N/2)$} which 
results in  {\color{\corcorrecao}$T_{DFT}(N)=O(N\log_2N)$}. Note, however, that FFT 
demands $N=2^i$ ($i>1$), yielding an exponential complexity
of $O(2^i \cdot i)$ on $i$. 
The Corollary~\ref{col:scfft} follows from the $O(N\log_2N)$ complexity of FFT 
in the Lemma~\ref{lem:dftrequiredcomplexity},

\begin{corollary}[Asymptotic Null FFT Throughput] \label{col:scfft}
The spectro-computational throughput of
the FFT algorithm does nullify as $N$ grows.
\end{corollary}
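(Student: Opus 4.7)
The plan is to treat the corollary as a direct instance of Lemma~\ref{lem:dftrequiredcomplexity}, in which the generic DFT-algorithm complexity $T_{DFT}(N)$ is specialized to the FFT complexity established immediately above the corollary statement. First, I would fix the FFT cost as $T_{DFT}(N)=\Theta(N\log_2 N)$, which follows from unrolling the Danielson--Lanczos recurrence $T_{DFT}(N)=O(N)+2T_{DFT}(N/2)$. Since the corollary concerns OFDM under the same SNR-bounded regime as the lemma, the hypothesis $M=\Theta(1)$ continues to apply, so the factor $\log_2 M$ from the useful bits per subcarrier and the per-multiplication cost $T_{MULT}(\log_2 M)$ collapse into constants that drop out of the asymptotic ratio in Eq.~(\ref{sc:dftmulti}).

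Second, I would substitute this complexity directly into the SC throughput expression of Eq.~(\ref{sc:dftinfty}) and evaluate the resulting limit:
\begin{equation*}
SC_{FFT}(N) \;=\; \lim_{N\to\infty}\frac{N}{T_{DFT}(N)} \;=\; \lim_{N\to\infty}\frac{N}{\Theta(N\log_2 N)} \;=\; \lim_{N\to\infty}\frac{1}{\Theta(\log_2 N)} \;=\; 0,
\end{equation*}
so the FFT throughput vanishes and, equivalently, violates the scalability condition of Def.~\ref{def:scadft}. A clean contrapositive phrasing is also available: Lemma~\ref{lem:dftrequiredcomplexity} demands $T_{DFT}(N)=\Theta(N)$ for a non-null throughput, yet $\Theta(N\log_2 N)$ grows strictly faster than $\Theta(N)$, so FFT cannot meet the required bound.

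The step that deserves the most care is not the limit computation but the book-keeping around constants: I must make explicit that, under $M=\Theta(1)$, the $\log_2 M$ term in the numerator and the $T_{MULT}(\log_2 M)$ term in the denominator combine into a single positive constant, leaving $N/T_{DFT}(N)$ as the only asymptotically relevant quantity. Once this reduction is stated, the corollary collapses to a one-line substitution followed by a textbook limit, making it an essentially immediate consequence of the preceding lemma.
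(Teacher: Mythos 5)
Your proposal is correct and follows essentially the same route as the paper: substitute the $\Theta(N\log_2 N)$ FFT complexity into the SC throughput ratio of Lemma~\ref{lem:dftrequiredcomplexity} and observe that $\lim_{N\to\infty} N/(N\log_2 N)=0$, violating Def.~\ref{def:scadft}. The only difference is that the paper's proof additionally covers the unbounded-SNR regime, noting that even if $M$ grows with $N$ the extra $d=\Omega(\log_2 N)$ bits in the numerator are cancelled by the $c=\Theta(d)$ multiplication cost in the denominator, so the limit is still at best $N/(N\log_2 N)\to 0$; your argument handles only the $M=\Theta(1)$ case, which is the standing assumption of the section and suffices for the corollary as stated.
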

\begin{proof}
From Lemma~\ref{lem:dftrequiredcomplexity}, the FFT throughput follows,
\begin{eqnarray}
 SC_{FFT} &=& \frac{Nd}{cN\log_2N}
\end{eqnarray}
If the SNR can get arbitrarily large such that the constellation
diagram length $M$ grows on $N$ then $d=\Omega(\log_2 N)$. In this case,
the complexity $c$ to multiply two $d$-bit numbers grows at least linearly 
on $d$. Thus, since the fastest multiplying algorithm implies in $c=\Theta(d)$,
the asymptotic throughput of FFT is given by Eq.~\ref{ineq:fftdoesntscales} at best.
Therefore, the FFT throughput nullifies as $N$ grows. 
\begin{eqnarray}
\lim_{N\to\infty}  \frac{N}{N\log_2N} &=& 0 \label{ineq:fftdoesntscales}
\end{eqnarray}
\end{proof}

Fig.~\ref{fig:fftnull} illustrates the asymptotic growth of the
FFT throughput for different subcarrier signal mappers assuming a total
of {\color{\corcorrecao}$T_{DFT}(N)=N\log_2N$} complex multiplications.
{\color{\corcorrecao} Without loss of generality for the asymptotic analysis, 
it is assumed each complex multiplication takes a constant $c_t$ of 1 picosecond, yielding 
a total runtime of $N\log_2N/10^{12}$ seconds. Note that hardware improvements
(e.g., pipelined FFT hardware) translates into lower $c_t$ i.e., faster execution
of a computational instruction. However, the overall \emph{number} of instructions 
remains $N\log_2N$, meaning that better hardware can improve wall-clock runtime
but cannot decrease the complexity of an algorithm. Hence, in fast pipelined FFT hardwares
the complexity penalizes performance indicators other than wall-clock runtime such as
portability, manufacturing cost and power consumption.}
Moreover, for all constellations, widening symbol spectrum by increasing the 
number of subcarriers causes the FFT throughput to decrease rather 
than increasing. This happens because complexity grows faster than 
the number of modulated bits in FFT. To overcome this bottleneck, the  
processing capability of the FFT hardware should scale on $N$.

\begin{figure}[t]
\centering
  \includegraphics[width=3in]{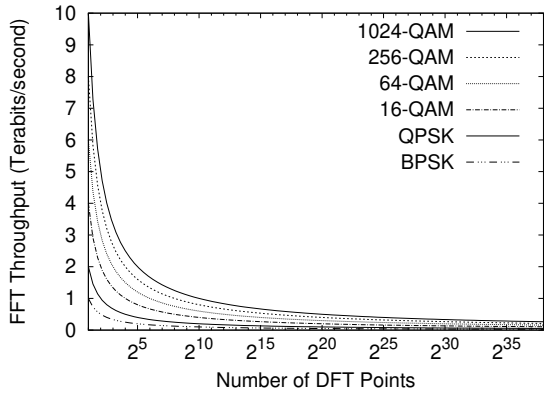}
  \caption{Asymptotic throughput of the FFT algorithm over distinct OFDM signal mappers.
As the number of points increases, complexity grows faster than the number of
modulated bits irrespective of the chosen mapper.
}
  \label{fig:fftnull}
\end{figure}

{\color{\corcorrecao}
We believe that the SC analysis of FFT (as illustrated in Fig.~\ref{fig:fftnull})
formally endorses the issues conjectured by prior works about the infeasibility 
of FFT for some scenarios of future wireless networks. 
In this sense, the FFT throughput nullification implied by complexity
translates the prohibitive power consumption FFT may experience 
under ``massive channel bandwidths''.
}

\subsection{The Sampling-Complexity Nyquist-Fourier Trade-Off}\label{subsec:sctradeoff}
DFT algorithms face two particular issues in the context multicarrier
waveforms such as OFDM. The first comes from a mismatch between the
unit of processing of DFT algorithms and the other algorithms along the
processing block diagram. Although blocks such as ``signal mapping'' and
``cyclic prefix insertion'' process a total of $N$ signal samples, 
they can process them in a sample-by-sample basis. Thus, the processing of 
a particular sample does not depend on the value of other samples in 
those blocks. 

By contrast, DFT algorithms cannot start running
before all $N$ samples are loaded in the input. Hence, the unit of processing
of DFT algorithms is $N$ times higher than their preceding and succeeding  
processing blocks. As $N$ grows, such mismatch turns a DFT algorithm 
to become a bottleneck along the OFDM block diagram. This problem has been 
described by the digital radio design literature as a runtime deadline to
be met by signal processing algorithms~\cite{hessar-usenix-2020},~\cite{liu-realtime5g-2019},~\cite{hellstrom-sdr80211xx-2019},~\cite{drozdenko-hardsoft-2018},~\cite{tan-sora-2011}.
By formalizing the problem as an asymptotic trade-off between sampling and computational overhead,
we can calculate the required asymptotic complexity to meet the sampling interval. 

Second, DFT algorithms are responsible to feed
the DAC in a classic OFDM transmitter. To avoid signal aliasing at the
receiver, the transmitter must sample the time-domain signals produced
by the IDFT algorithm within a specific time interval. This interval
is calculated from the Nyquist sampling theorem which states that
the largest time interval between two equally spaced (time-domain) samples of a 
signal band-limited to $W$~Hz must be $T_{NYQ}=1/(2W)$ seconds. In the case 
of complex IQ modulators where the real and imaginary dimensions of the signal 
are independently and simultaneously sampled by two parallel samplers, 
$T_{NYQ}=1/W$ seconds. 

In IQ systems, \emph{at least} $W$ samples must feed the
DAC every second -- which is known as the Nyquist sampling rate -- otherwise
the signal frequency can suffer from aliasing thereby preventing its 
correct identification at the receiver.
For an inter-subcarrier space of $\Delta f$ Hz, the width of an
 $N$-subcarrier OFDM signal is $W_{OFDM}=N\Delta f$, so a complex time-domain 
OFDM sample must feed the DAC every,
\begin{eqnarray}
T_{NYQ}&=&\frac{1}{N\Delta f}   \quad \text{seconds}\label{eqn:tnyqofdm}
\end{eqnarray}

Based on Eq.~\ref{eqn:tnyqofdm}, we relate the asymptotic complexity 
of DFT algorithms with the Nyquist interval. As result, we introduce the
sampling-complexity (Nyquist-Fourier) trade-Off in Def.~\ref{def:sctradeoff}.
\begin{definition}[The Sampling-Complexity Nyquist-Fourier Trade-Off]\label{def:sctradeoff}
In OFDM radios with $\Delta f$~Hz of inter-subcarrier space, 
the $N$-point DFT computational complexity $T_{DFT}(N)$ increases 
as the Nyquist period $1/(N\Delta f)$ decreases to improve symbol throughput.
\end{definition}
 
The sequence of discrete time samples output by the
IDFT algorithm corresponds to the time-domain version of
the OFDM symbol that lasts $T_{SYM}=1/\Delta f$ seconds. In
the design of a real-time OFDM radio the entire digital signal 
processing must take no more $T_{SYM}$, otherwise the  system either
suffers from sample losses or misses the real-time communication 
capability~\cite{hessar-usenix-2020},~\cite{liu-realtime5g-2019},~\cite{hellstrom-sdr80211xx-2019},
\cite{drozdenko-hardsoft-2018, tan-sora-2011}.
We capture this condition in terms of asymptotic complexity
in Lemma~\ref{lemma:dftnyqupperbound}.

\begin{lemma}[DFT Upper Bound for OFDM Waveforms]\label{lemma:dftnyqupperbound}
The computational complexity upper bound required to solve the 
DFT problem under the Nyquist interval constraint on radios with finite 
processing capabilities is $O(1)$.
\begin{proof}
Considering that a DFT algorithm is the asymptotically most complex
procedure of the basic OFDM waveform, its complexity must satisfy
\begin{eqnarray}
{T_{DFT}(N)} &\leq& T_{SYM} = NT_{NYQ} \label{eqn:dftcond1}
\end{eqnarray}
{\color{\corcorrecao}
Assuming that the throughput improvement is achieved by
enlarging $N$ and that the symbol duration does not grow on $N$ 
(e.g., IEEE 802.11ac~\cite{ieee80211ac-2013}), it follows that}
\begin{eqnarray}
\lim_{N\to\infty} T_{DFT}(N) &\leq& T_{SYM} \label{eq:dftcond2} \\
T_{DFT}(N) &=& O(1) \label{eq:dftcond3}
\end{eqnarray}
\end{proof}
\end{lemma}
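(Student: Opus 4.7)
The plan is to derive the $O(1)$ upper bound directly from the Nyquist--Fourier trade-off of Def.~\ref{def:sctradeoff}, combined with a real-time processing deadline imposed on the DFT stage. I would begin by articulating the deadline: since the DFT is the asymptotically dominant computational block of the basic OFDM chain, its runtime cannot exceed the symbol duration $T_{SYM}=1/\Delta f$, otherwise each symbol leaves behind an ever-growing computational backlog and the radio loses its real-time capability. This yields the fundamental inequality $T_{DFT}(N) \leq T_{SYM}$, which is the anchor of the whole argument.

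Next I would rewrite $T_{SYM}$ using the Nyquist identity $T_{SYM}=N\,T_{NYQ}$ implicit in Eq.~\ref{eqn:tnyqofdm}. Read naively, this seems to license a relaxed $O(N)$ bound; however, that reading tacitly treats $T_{NYQ}$ as a constant. The decisive modelling step is to invoke the standard OFDM design convention (e.g., the scaling policy of IEEE~802.11ac) under which throughput growth is pursued by \emph{widening} the spectrum through larger $N$ while holding the per-symbol duration $T_{SYM}$ fixed; correspondingly $T_{NYQ}=T_{SYM}/N$ shrinks as $\Theta(1/N)$. Under this convention the right-hand side of the deadline inequality is a constant independent of $N$, so taking the limit $N\to\infty$ and applying the definition of big-O yields $T_{DFT}(N)=O(1)$, closing the argument.

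The hardest part, I expect, is not the short inequality chain itself but the conceptual step of pinning down the correct invariant. One must justify why $T_{SYM}$ --- rather than $T_{NYQ}$, or the total sample count, or some hardware-specific throughput figure --- is the quantity that is held constant as $N\to\infty$. This is ultimately a statement about how practical OFDM standards scale their numerology, so I would anchor the assumption on explicit standard references and on the SC throughput framework already established in Subsection~\ref{subsec:dftlimits}, while making clear that relaxing the fixed-$T_{SYM}$ assumption would simply loosen the resulting bound proportionally without changing the qualitative message: under fixed-duration symbols, the DFT block must be asymptotically constant-time, which no conventional DFT or FFT variant can achieve.
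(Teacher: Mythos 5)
Your proposal is correct and follows essentially the same route as the paper's own proof: the real-time deadline $T_{DFT}(N)\leq T_{SYM}=NT_{NYQ}$, combined with the assumption that $T_{SYM}$ is held fixed while $N$ grows (citing IEEE 802.11ac), gives $T_{DFT}(N)=O(1)$. Your added discussion of why the naive $O(N)$ reading fails (it would require $T_{NYQ}=\Theta(1)$) makes explicit a point the paper leaves implicit, but the argument is the same.
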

Note that one can relax the complexity lower bound predicted
by Lemma~\ref{lemma:dftnyqupperbound} if the radio digital 
baseband processing capabilities can grow arbitrarily on
the number of subcarriers. However, with the end of the
so-called ``Moore's law''~\cite{moore-delayed-2003}, higher
processing capability translates into higher manufacturing 
cost, power consumption and hardware area, bringing doubts to 
the feasibility of portable multicarrier Terahertz radios.

The Corollary~\ref{col:unfeasibledft} follows from Lemma~\ref{lemma:dftnyqupperbound}.
\begin{corollary}[Unfeasible Nyquist-Constrained DFT] \label{col:unfeasibledft}
Given that the minimum possible lower bound complexity of the
DFT problem is $\Omega(N)$~\cite{fftlowerbound-2009} and the
Nyquist interval imposes an upper bound of $O(1)$ (Lemma~\ref{lemma:dftnyqupperbound}), 
\emph{no DFT algorithm can meet the Nyquist interval as $N$ grows}.
\end{corollary}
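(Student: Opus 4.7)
The plan is to obtain a direct asymptotic contradiction from the two bounds already in hand. First I would suppose, toward contradiction, that there exists an $N$-point DFT algorithm whose runtime meets the Nyquist deadline for all sufficiently large $N$. Under this assumption, Lemma~\ref{lemma:dftnyqupperbound} immediately forces the complexity of that algorithm to satisfy $T_{DFT}(N)=O(1)$, i.e., there is a constant $c>0$ and some $N_0'$ with $T_{DFT}(N)\leq c$ whenever $N\geq N_0'$.

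Next I would invoke the unconditional lower bound $T_{DFT}(N)=\Omega(N)$ for the DFT problem attributed to~\cite{fftlowerbound-2009} and already used in the proof of Lemma~\ref{lem:dftrequiredcomplexity}: any correct algorithm must at least read its $N$ inputs (and produce its $N$ outputs), so there exist constants $k>0$ and $N_0$ with $T_{DFT}(N)\geq kN$ for all $N\geq N_0$. Combining the two inequalities gives $kN \leq T_{DFT}(N) \leq c$ for every $N\geq \max(N_0,N_0')$, which fails as soon as $N>c/k$. Hence no DFT algorithm can honor the Nyquist interval as $N$ grows.

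The argument itself is essentially a one-line asymptotic clash, so I do not expect a technical obstacle in the calculation. The only care needed is in aligning hypotheses: Lemma~\ref{lemma:dftnyqupperbound} quietly presupposes a symbol duration $T_{SYM}=1/\Delta f$ independent of $N$ and a radio with finite (non-scaling) processing capability, while the $\Omega(N)$ lower bound must be cited as an algorithm-independent statement about the DFT problem rather than about any specific method such as FFT. I would state both hypotheses explicitly at the start of the proof so that the contradiction is unambiguous, and then close with the one-line bound-clashing computation above.
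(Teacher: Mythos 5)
Your proposal is correct and matches the paper's intent exactly: the corollary is stated in the paper without a separate proof precisely because it is the one-line clash between the $\Omega(N)$ lower bound (at least $N$ instructions to read the input) and the $O(1)$ upper bound from Lemma~\ref{lemma:dftnyqupperbound}, which is exactly the contradiction you formalize. Your explicit note about the hidden hypotheses (fixed $T_{SYM}$ and non-scaling processing capability) is a faithful reading of the lemma's assumptions, not a deviation.
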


To face the result of the Corollary~\ref{col:unfeasibledft}, 
one may relax the Nyquist constraint which results in the
compressive sensing systems~\cite{qaisar-compressive-2013}.
However, high accuracy signal frequency prediction  in such systems 
has been proved to be a NP-hard problem~\cite{mousavi-cssurvey-2019} 
which turns out to much more complex systems because only exponential 
time algorithms are known for that class of problems. 

{\color{\corcorrecao} 
Note that the sampling-complexity trade-off does not
restrict to multicarrier waveforms such as OFDM and its variants
but also to single carrier signals that rely on DFT to
mitigate the peak-to-average power ratio of uplink transmissions in
wireless cellular networks~\cite{5gnumerology-2016}. Of course, the
trade-off is more critical to waveforms designed for 
broadband traffic services that target wider spectrum.

}

\section{Pushing the Capacity-Complexity Limits of DFT}\label{sec:pdft}
In this section, we consider methods to overcome 
the throughput bottleneck faced by $N$-point DFT algorithms such as 
FFT (Section~\ref{sec:dftlimits}) {\color{\corcorrecao}and discuss a solution to mitigate
the sampling-complexity trade-off described in Subsection~\ref{subsec:sctradeoff}}.

\subsection{{\color{\corcorrecao}Parameterized} Complexity}\label{subsec:parameterizedcomplexity}
To mitigate the Nyquist-Fourier trade-off in practice,
we apply an algorithm design technique inspired in the parameterized 
complexity~\cite{2464827}.
The parameterized complexity was originally proposed to enable 
the polynomial time solution of multi-parameter NP-complete problems.
The idea consists in bounding one or more parameters of the problem
such that the complexity of the solution becomes a polynomial function 
of the non-bounded parameters. 
For a comprehensive study about the parameterized 
complexity please, refer to~\cite{2464827}. 

{\color{\corcorrecao} We consider an alternative parameterized formulation of the
frequency-time transform problem in order to achieve faster-than FFT computations.}
In a typical OFDM transmitter, the IDFT operation associates $N$ 
input frequency samples $\freqsample_k$ ($k=0,\cdots,N-1$) to $N$ 
respective baseband frequencies $k$~Hz at the time instant $t$ 
by the complex multiplication~$\freqsample_ke^{j2\pi kt/N}$. The direct IDFT algorithm 
repeats these $N$ multiplications to compute $N$ time samples, 
which yields a total of $O(N^2)$ operations. To cut this complexity,
we parameterize the number $\parametro\leq N$ of frequency samples 
associated to a given baseband frequency, as illustrated in Fig.~\ref{fig:pdftschem}.
In the parameterized DFT scheme, all the $N$ frequency 
samples are equally divided across $\parametro$ baseband frequencies $k$, 
leading to $n=N/\parametro$ groups (solid rectangles) of~$\parametro$ 
frequency samples each. 
An $n$-point IDFT across frequency samples of distinct groups (dashed rectangles), 
yields one $\parametro$-sample time domain group per time instant $t=0,1,\cdots,n-1$, 
resulting in a total of $n\parametro=N$ time domain samples.

{\color{\corcorrecao}
We identify that the waveform resulting from the parameterized 
DFT computation we have just described is not new. Indeed,
it exactly matches {\color{\corcorrecao} vector OFDM (V-OFDM),}
a waveform originally proposed to reduce the cyclic prefix 
overhead of OFDM~\cite{vofdm-2001}. Prior works have investigated 
V-OFDM with respect to different aspects. 
Cheng et al.~\cite{cheng-vofdmrayleigh-2011} study the BER
performance in Rayleigh channels and Li et al.~\cite{6222369}
identify setups in which the V-OFDM BER performs similarly or better than OFDM
for different low-complexity receivers. 
More recently,  V-OFDM has been merged with other signal processing techniques
such as index modulation~\cite{vofdmim-2017} and MIMO~\cite{zhang-mimovofdm-access-2020}.  

Our work builds on those prior works to present novel results for
V-OFDM. In particular, we rely on V-OFDM as an alternative to avoid
the throughput nullification faced by FFT in OFDM. Also, we exploit 
the VB structure to relax the power of two constraint of FFT without
giving up a fast asymptotic complexity. By releasing all V-OFDM
parameters from growing as powers of two, more flexible numerologies 
can be enabled (e.g. $n=3$, $N=156$). In Subsections~\ref{subsec:V-OFDM} 
and~\ref{subsec:dftproposal}, we review the V-OFDM signal 
and discuss how to relax the $N=2^i$ constraint of V-OFDM keeping a complexity 
that does not nullify throughput on $N$, respectively.
}

\begin{figure}[t]
        \centering
        \includegraphics[width=3.05in]{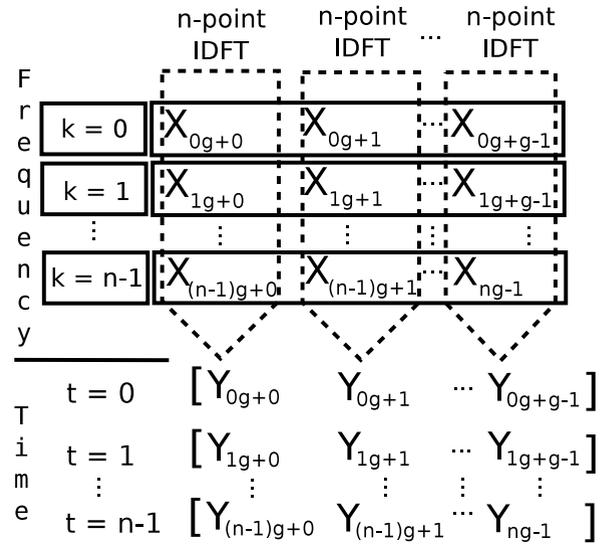}
        \caption{{\color{\corcorrecao}Frequency-time transform scheme of Vector OFDM.
The $N$-size frequency domain input is arranged into $n$ groups (solid rectangles)
of frequency $k$ and length $\parametro=N/n$ each. 
}
\label{fig:pdftschem}}
\end{figure}

\subsection{{\color{\corcorrecao}The Vector OFDM Signal}}\label{subsec:V-OFDM}
The V-OFDM transmitter arranges the $N$-sample complex frequency domain
symbol $\{X_i\}_{i=0}^{N-1}$ into $L$ complex vectors 
blocks (VBs) $\mathbf{x}_{\color{black}l}$ (${\color{black}l=0,1,\cdots,L-1}$) 
having $\vbsize=N/L$ samples each. 
Denoting $[\cdot]^T$ as the transpose of the matrix
$[\cdot]$, the samples of $\{X_i\}_{i=0}^{N-1}$ within the ${\color{black}l}$-th 
VB $\mathbf{x}_l$ is given by
\begin{eqnarray}
\mathbf{x}_{\color{black}l}=[X_{{\color{black}l}\vbsize+\mathsf{m}}]^T\quad \mathsf{m}=0,1,\cdots,\mathcal{M}-1\label{eqn:vbfreqarray}
\end{eqnarray}
The sequence of complex frequency domain samples is
\begin{eqnarray}
\mathbf{\freqsample}=[\freqsample_0,\freqsample_1,\cdots,\freqsample_{N-1}] &=& [\mathbf{x}_{\color{black}0}^T,\mathbf{x}_{\color{black}1}^T,\cdots,\mathbf{x}_{\color{black}{L-1}}^T]\nonumber\\\label{eqn:freqdomainV-OFDMsymbol}
\end{eqnarray}

The ${\color{black}q}$-th time domain 
VB (${\color{black}q=0,1,\cdots,L-1}$) is denoted as
\begin{eqnarray}
\mathbf{y}_{\color{black}q} &=& [\timesample_{{\color{black}q}\cdot\mathcal{M}+\mathsf{m}}]^T\quad \mathsf{m}=0,1,\cdots,\mathcal{M}-1\label{eqn:vbtimearray}
\end{eqnarray}
The V-OFDM literature~\cite{zhang-mimovofdm-access-2020},~\cite{vofdmim-2017},~\cite{cheng-vofdmrayleigh-2011},~\cite{6222369}
performs $\mathcal{M}$ inverse $L$-point FFTs to calculate 
each time domain VB. Since this contrasts to a single 
$N$-point FFT of OFDM, we refer to it as the Parameterized FFT (PFFT).
The resulting samples within the $q$-th time domain VB is therefore
\begin{eqnarray}
\mathbf{y}_{\color{black}q} =
      \begin{bmatrix}
         {\timesample}_{{\color{black}q}\cdot\mathcal{M}+0} \\
         {\timesample}_{{\color{black}q}\cdot\mathcal{M}+1}\\
        \vdots \\ 
         {\timesample}_{{\color{black}q}\cdot\mathcal{M}+(\mathcal{M}-1)} 
     \end{bmatrix} 
           =
      \begin{bmatrix}
         {\freqsample}_{{\color{black}0}\cdot\mathcal{M}+0} \\
         {\freqsample}_{{\color{black}0}\cdot\mathcal{M}+1}\\
        \vdots \\ 
         {\freqsample}_{{\color{black}0}\cdot\mathcal{M}+\mathcal{M}-1} 
     \end{bmatrix}e^{j2\pi{\color{black}q}{\color{black}0}/L} +{\color{black}\cdots}+
     \nonumber\\      
      \begin{bmatrix}
         {\freqsample}_{{\color{black}(L-1)}\mathcal{M}+0} \\
         {\freqsample}_{{\color{black}(L-1)}\mathcal{M}+1}\\
        \vdots \\ 
         {\freqsample}_{{\color{black}(L-1)}\mathcal{M}+\mathcal{M}-1} 
     \end{bmatrix}e^{j2\pi{\color{black}q}{\color{black}(L-1)}/L} 
\label{eqn:timedomainvb}
\end{eqnarray}

The time domain transmitting sequence is 
\begin{eqnarray}
\mathbf{Y}=[\timesample_0,\timesample_1,\cdots,\timesample_{N-1}] = [\mathbf{y}_{\color{black}0}^T,\mathbf{y}_{\color{black}1}^T,\cdots,\mathbf{y}_{\color{black}L-1}^T]\label{eqn:timedomainV-OFDMsymbol}
\end{eqnarray}

Both the normalized inverse DFT and DFT signals are respectively summarized as follows
\begin{eqnarray}
\mathbf{y}_{\color{black}q}&=&\frac{1}{L}\sum_{{\color{black}l=0}}^{{\color{black}L-1}}\mathbf{x}_{\color{black}l}e^{j2\pi{\color{black}q}{\color{black}l}/L}\quad {\color{black}q=0,1,\cdots,L} \\
\mathbf{x}_{\color{black}l}&=&\frac{1}{L}\sum_{{\color{black}q=0}}^{{\color{black}L-1}}\mathbf{y}_{\color{black}q}e^{-j2\pi{\color{black}q}{\color{black}l}/L}\quad {\color{black}l=0,1,\cdots,L}
\end{eqnarray}

After the inverse DFT transform, the signal follows as in the classic OFDM waveform
for transmission. {\color{\corcorrecao}  At the receiver, the signal undergoes the reverse 
steps unless for the detection processing whose complexity can grow exponentially 
on the VB size (e.g., maximum likelihood estimation). While the conditions for
low complexity detection have been discussed by the V-OFDM literature e.g., [15], [30],
in this work we focus on the complexity of the IDFT/DFT problem
in V-OFDM regardless of the chosen detection heuristic. 
In what follows, we adopt the notation $L\mathcal{M}$ (instead of $ng$)
that is usual across the V-OFDM literature.
}

\begin{algorithm}[t]
\begin{algorithmic}[1]
\STATE \COMMENT{$\freqsample_i (i=0,1,\cdots,N-1)$ is the frequency domain input}
\STATE \COMMENT{$\timesample_i (i=0,1,\cdots,N-1)$ is the time domain output}
\STATE \COMMENT{$L$ is the number of points per vector block};
\STATE \COMMENT{$\mathcal{M}$ is the number of vectors such that $N=L\mathcal{M}$};
\FOR {($i=0$; $i<N$; $++i$)}
\STATE $\timesample_i \leftarrow 0$; \COMMENT{initialization of the entire time-domain array};
\ENDFOR
\label{ln:initialization}
\FOR {(${\color{black}q}=0$; ${\color{black}q}<L$; $++{\color{black}q}$)}\label{alg:propdftfor1}
  \FOR {($\mathsf{m}=0$; $\mathsf{m}<\mathcal{M}$; $++\mathsf{m}$)}\label{alg:mfor2}
     \FOR {(${\color{black}l}=0$; ${\color{black}l}<L$; $++{\color{black}l}$)}\label{alg:propdftfor3}
      \STATE 
$\timesample_{{\color{black}q}\cdot \mathcal{M} + \mathsf{m}}=
   \timesample_{{\color{black}q}\cdot \mathcal{M} + \mathsf{m}} + 
     X_{{\color{black}l}\cdot \mathcal{M}  + \mathsf{m}}\cdot e^{j2\pi {\color{black}q}{\color{black}l}/L}$;
     \ENDFOR
  \ENDFOR
\ENDFOR
\end{algorithmic}
  \caption{The parameterized (inverse) DFT (PDFT) algorithm for the Vector OFDM waveform.
By relying on the DFT algorithm and the parameterization technique, 
PDFT relaxes the $N=2^i$ constraint of FFT (thereby enabling
a wider range of numerologies) and can run linearly on $N$ rather
than exponentially on $i$.
\label{alg:pdft}}
\end{algorithm}

\subsection{Parameterized DFT Algorithm for V-OFDM}\label{subsec:dftproposal}
{\color{\corcorrecao}
In order to relax the power of two constraint on the
spectrum parameters of V-OFDM, we replace FFT by the
straightforward DFT algorithm. Since V-OFDM performs $\mathcal{M}$
$L$-point frequency-time transforms, the DFT and FFT
complexities in V-OFDM are $O(\mathcal{M}L^2)$ and $O(\mathcal{M}L\log_2 L)$,
respectively. However, differently from OFDM in which the asymptotic
complexity of DFT cannot be as efficient as FFT's, one can exploit the 
vectorization feature of V-OFDM to refrain the DFT complexity.
 
We achieve that by parameterizing $L$ to $\Theta(1)$, getting what
we refer to as the parameterized DFT (PDFT) algorithm (Algorithm~\ref{alg:pdft}). 
The parameterization provides DFT with non-null throughput on $N$
as demonstrated in Lemma~\ref{lem:pdftV-OFDM}.

\begin{lemma}[Scalable Throughput of the Parameterized DFT Algorithm]~\label{lem:pdftV-OFDM}
By setting the number of points $L$ to $\Theta(1)$,
the PDFT algorithm (Algorithm~\ref{alg:pdft}) achieves
non-null throughput as the number of subcarriers $N$ gets arbitrarily
large (Def.~\ref{def:scadft}). 
\end{lemma}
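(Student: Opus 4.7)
The plan is to establish the lemma by a direct operation count on Algorithm~\ref{alg:pdft} and then invoke the scalability criterion of Def.~\ref{def:scadft}. First, I would inspect the three nested loops (lines~\ref{alg:propdftfor1}--\ref{alg:propdftfor3}): the outer loop on $q$ iterates $L$ times, the middle loop on $\mathsf{m}$ iterates $\mathcal{M}$ times, and the inner loop on $l$ iterates $L$ times, each executing a single complex multiply-accumulate involving a twiddle factor $e^{j2\pi q l/L}$. Since the $O(N)$ initialization loop on line~\ref{ln:initialization} is dominated by the triple loop, the total number of complex multiplications carried out by PDFT is $T_{DFT}(N) = \mathcal{M}L^2$. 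Using the V-OFDM identity $N = L\mathcal{M}$, this rewrites as $T_{DFT}(N) = NL$.

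Next, I would substitute this count into the SC throughput expression~(\ref{sc:dft}). With the channel SNR bounded as in Subsection~\ref{subsec:dftlimits} (so $d=\log_2 M$ and the per-multiplication cost $c$ are constants), the PDFT throughput reads
\begin{eqnarray}
SC_{PDFT}(N) &=& \frac{Nd}{c\,NL} \;=\; \frac{d}{cL}. \nonumber
\end{eqnarray}
Imposing the hypothesis $L=\Theta(1)$, there exist positive constants $L_{\min},L_{\max}$ with $L_{\min}\le L\le L_{\max}$ for all $N$, so $SC_{PDFT}(N)\ge d/(cL_{\max})>0$ uniformly in $N$.

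Finally, I would take $N\to\infty$ and verify the scalability condition of Def.~\ref{def:scadft}:
\begin{eqnarray}
\lim_{N\to\infty} \frac{N}{T_{DFT}(N)} &=& \lim_{N\to\infty} \frac{N}{NL} \;=\; \frac{1}{L} \;>\; 0, \nonumber
\end{eqnarray}
which is exactly the inequality~(\ref{eq:conditiondft}). This shows that PDFT meets the DFT capacity-complexity scalability criterion and hence sustains a non-null throughput as $N$ grows, in contrast with the nullification established for FFT in Corollary~\ref{col:scfft}.

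The reasoning is essentially a bookkeeping argument, so I do not anticipate a conceptual obstacle; the only subtlety is to make explicit that the gain over FFT comes from replacing the global $O(N\log_2 N)$ transform by $\mathcal{M}=N/L$ independent transforms whose individual $O(L^2)$ cost is frozen by the parameterization $L=\Theta(1)$. I would also remark, to forestall confusion, that although the per-subcarrier constant $1/L$ worsens as $L$ grows, the asymptotic order in $N$ remains linear for any fixed $L$, which is what Def.~\ref{def:scadft} actually demands.
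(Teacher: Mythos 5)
Your proposal is correct and follows essentially the same route as the paper: both count the PDFT cost as $\mathcal{M}L^2 = NL$ via $N=L\mathcal{M}$, freeze $d$, $c$, and $L$ as constants under the bounded-SNR and $L=\Theta(1)$ hypotheses, and conclude that $\lim_{N\to\infty} N/T_{DFT}(N) = 1/L > 0$ satisfies Def.~\ref{def:scadft}. Your version is slightly more explicit about the loop-level operation count and the uniform bounds on $L$, but there is no substantive difference in the argument.
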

\begin{proof}
Since $N=ML$, setting $L=\Theta(1)$ leads
the $O(L^2\mathcal{M})$ complexity of PDFT to become $O(\mathcal{M})=O(N)$.
Thus, assuming the channel conditions does not enable 
arbitrarily large constellation diagrams (as in the FFT analysis of Lemma~\ref{lem:dftrequiredcomplexity}),
the total number of bits per V-OFDM symbol is $N\times d=N\times\log_2 M = O(N)$ and the
computational complexity to perform a single complex multiplication
is $\Theta(1)$. Therefore, the throughput (Def.~\ref{def:scadft}) of the 
PDFT algorithm does not nullify on $N$, as demonstrated below:
\begin{eqnarray}
\lim_{N\to\infty} \frac{Nd}{L^2M} =
\lim_{N\to\infty} \frac{Nd}{N} = c > 0 \label{eqn:pdftV-OFDM}
\end{eqnarray}
\end{proof}

If $N=\mathcal{M}L$ is set to grow as a power of two $2^i$, setting
$L$ to $\Theta(1)$ leads both FFT and PDFT to run in $O(\mathcal{M})=O(2^i/L)$ time 
complexity. However, if that constraint is relaxed, PDFT can provide V-OFDM with
flexible numerology while
 running linearly on $N$ rather than exponentially on $i$.
The flexible numerology of PDFT, turns V-OFDM a competitive waveform
for spectrum allocation in fragmented frequency bands. Besides, the
reduced complexity is a step towards the enhancement of current
broadband-driven services such as the enhanced mobile broadband service 
of 5G~\cite{3gpp2020a} and the very high throughput service of 
IEEE 802.11ac~\cite{ieee80211ac-2013}.
}

\begin{figure}[t]
\centering
        \includegraphics[width=3in]{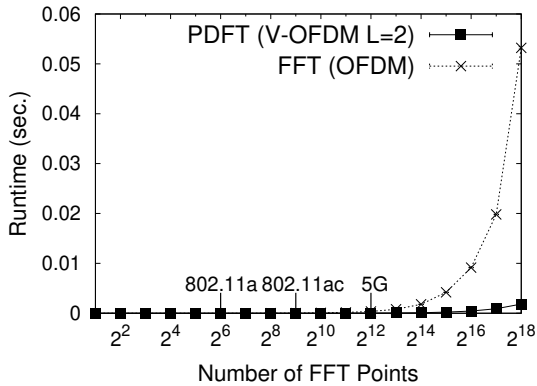}
        \caption{FFT vs. PDFT (proposed): Simulation runtime.  \label{fig:pdftfftruntime}}
\end{figure}

\subsection{Multiplierless Parameterized DFT and the Sampling-Complexity Trade-off}\label{subsec:multiplierless}
{\color{\corcorrecao} 
We identify that the specific case $L=2$ can have 
notable implications for the lower bound complexity of
the frequency-time transform problem in V-OFDM.
As explained in Subsection~\ref{subsec:sctradeoff},
a lower bound complexity of $\Omega(1)$ is required if the frequency-time 
transform computation of a $N$-point signal is constrained by the Nyquist 
sampling theorem. This is typical requisite of real-time implementations
of physical layer standards such as 5G~\cite{liu-realtime5g-2019} and
Wi-Fi~\cite{hessar-usenix-2020},~\cite{hellstrom-sdr80211xx-2019},~\cite{drozdenko-hardsoft-2018},~\cite{tan-sora-2011}.
Next, we explain how the $L=2$ case of  V-OFDM relate to the
sampling-complexity trade-off.

By setting $L$ to $2$, the $N$-subcarrier V-OFDM symbol is
vectorized into only two VBs, leading to $N/2$ 2-point DFTs. Since these
2-point DFTs are completely independent from each other, they can be computed
 in parallel. Each 2-point transform takes $O(1)$ time complexity regardless of 
the value of $N$, therefore the entire solution requires $N/2$ complex additions. 
Indeed, both the indexes $l$ and $q$ that iterate across 
the frequency and time VBs (lines~\ref{alg:propdftfor3} and \ref{alg:propdftfor1} 
in Algorithm~\ref{alg:pdft}, respectively), vary from $0$ to $1$, causing
the complex exponential to simplify to either $1$ or $-1$. The two time domain
VBs are
\begin{eqnarray}
\mathbf{y}_{\color{black}0} &=&
      \begin{bmatrix}
         {\freqsample}_{{\color{black}0}\cdot N/2+0} \\
         {\freqsample}_{{\color{black}0}\cdot N/2+1}\\
        \vdots \\ 
         {\freqsample}_{{\color{black}0}\cdot N/2+N/2-1} 
     \end{bmatrix}e^0
           +
      \begin{bmatrix}
         {\freqsample}_{{\color{black}1}\cdot N/2+0} \\
         {\freqsample}_{{\color{black}1}\cdot N/2+1}\\
        \vdots \\ 
         {\freqsample}_{{\color{black}1}\cdot N/2+N/2-1} 
     \end{bmatrix}e^0\nonumber\\
\label{eqn:multiplierlessvb0}
\end{eqnarray}
and
\begin{eqnarray}
\mathbf{y}_{\color{black}1} &=&
      \begin{bmatrix}
         {\freqsample}_{{\color{black}0}\cdot N/2+0} \\
         {\freqsample}_{{\color{black}0}\cdot N/2+1}\\
        \vdots \\ 
         {\freqsample}_{{\color{black}0}\cdot N/2+N/2-1} 
     \end{bmatrix}e^0 
           +
      \begin{bmatrix}
         {\freqsample}_{{\color{black}1}\cdot N/2+0} \\
         {\freqsample}_{{\color{black}1}\cdot N/2+1}\\
        \vdots \\ 
         {\freqsample}_{{\color{black}1}\cdot N/2+N/2-1} 
     \end{bmatrix}
e^{j\pi}\nonumber\\
\label{eqn:multiplierlessvb1}
\end{eqnarray}
Therefore, from the perspective of an analysis that considers
complex multiplications as the asymptotic dominant instruction of
the DFT problem, the $L=2$ case satisfies the $\Omega(1)$ lower bound
requisite of the sampling-complexity trade-off. By contrast, if all
instructions are considered, the solution does not meet that requisite.
However, the $O(N)$ complex additions are easier to implement in
practice and can remove the DFT bottleneck by being performed in parallel.
Note also that the case $L=1$ of V-OFDM dispenses the DFT computation at the 
transmitter but requires an extra $N$-point IDFT at the receiver. In turn,
the case $L=2$ is multiplierless at both the transmitter and the receiver.
}
\begin{figure}[t]
\centering
      \includegraphics[width=3in]{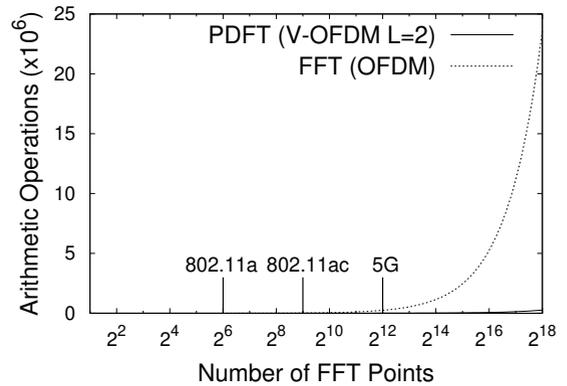}
      \caption{FFT vs. PDFT (proposed):  Complexity.  \label{fig:pdftfftanalyticaltime}}
\end{figure}

\section{Evaluation}\label{sec:pdftvalidation}
In this section, we present simulation results
to compare the FFT and PDFT algorithms
and to validate our theoretical analysis.
{\color{black} Please, note that FFT remains the recommended choice
even for the most recent variations of
V-OFDM~\cite{zhang-mimovofdm-access-2020}\cite{vofdmim-2017}.
Moreover, the FFT performance remains a reference for the
frequency-time transform problem
in current and upcoming wireless network physical layer standards~\cite{rappaport-100GHz-6g-2019},~\cite{3gpp2020a},~\cite{80211ax}. 
}
In Subsection~\ref{subsec:dfttools}, we describe the 
methodology of the simulations. In Subsection~\ref{subsec:poweroftwo},
we discuss the performance of both algorithms under 
a power of two number of points, as required by the FFT 
algorithm. In Subsection~\ref{subsec:dftnonpoweroftwo},
we discuss the performance of the PDFT algorithm under
a non power of two number of points.

\subsection{Tools and Methodology}\label{subsec:dfttools}
We compare our proposed PDFT algorithm for V-OFDM against the FFT
algorithm employed by both OFDM and V-OFDM state of the art. 
We implement the PDFT algorithm in C++ and refer to the
FFT implementation of~\cite{recipes-2007} to assess the FFT algorithm.
It is important to remark that the runtime performance of our chosen
FFT implementation
can be outperformed by highly optimized FFT libraries available
in the literature e.g.,~\cite{1386650}. However, these libraries
impose several {\color{\corcorrecao}preliminary runs} of distinct DFT algorithms to pick
the one that perform best for the considered platform and value
of $N$. Hence, the chosen algorithm may vary across distinct
values of $N$ and the assessed runtime is highly dependent on
several hardware optimizations that vary across the chosen platform. 
By contrast, our focus in this work is on the asymptotic complexity
improvement rather than on hardware optimizations that can be
handled in future work.

\begin{figure}[t]
\centering
      \includegraphics[width=3in]{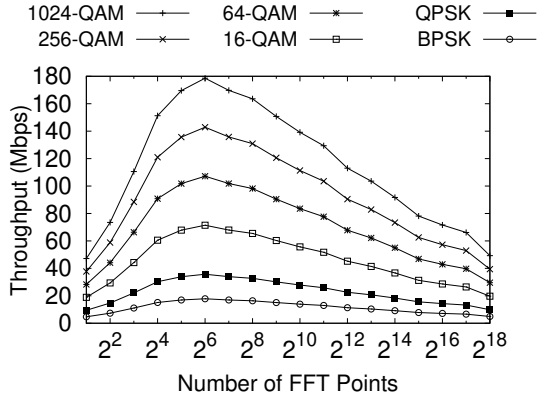}
   \caption{Throughput of FFT algorithm under different signal constellation mappers.  \label{fig:fftthroughput}}
\end{figure}

We vary the number of points which is equivalent
to the number of subcarriers $N$ for both algorithms. 
In this simulation, we vary $N$ as powers of two considering a relatively small number
of subcarriers, as in today's FFT-based waveforms. In the other
simulation, we consider non power of two $N$ and a minimum
of $10^{5}$ subcarriers. In this simulation, we also vary the 
number of VBs of PDFT, as well as the number of points per VB. For each
algorithm, we assess the runtime  {\color{\corcorrecao}$T_{DFT}(N)$} (seconds) and the 
throughput $SC(N)$ (Megabits per second) according to the Def.~\ref{def:scadft}.
We also report the complexity of the compared algorithms. 
{\color{black} Note that the complexity captures the total number of calculations performed
by the algorithms and holds irrespective of the way they are implemented 
(such as pipelined ASIC).}
Unless differently stated, the throughput of each algorithm was
measured considering each subcarrier is BPSK-modulated.

We sampled the wall-clock runtime  {\color{\corcorrecao}$T_{DFT}(N)$} of each algorithm with 
the standard C++ \texttt{timespec} library~\cite{timespec-2018} 
under the profile \texttt{CLOCK\_MONOTONIC} on a 1.8 GHz i7-4500U 
Intel processor with 8 GB of memory. We repeated each experiment 
as many times as needed in order to achieve a mean with relative error
below $5\%$ with a confidence interval of $95\%$. Each sample of 
 {\color{\corcorrecao}$T_{DFT}(N)$} was forwarded to the Akaroa-2 tool~\cite{akaroa2-2010} 
for statistical treatment. Akaroa-2 determined the minimum number of samples 
required to reach the transient-free steady-state mean estimation for  {\color{\corcorrecao}$T_{DFT}(N)$}.
In each execution, we assigned our {\color{black} central processing unit (CPU)} 
process with the largest real-time priority 
and employed the \texttt{isolcpus} Linux kernel directive to allocate one 
physical CPU core exclusively for each process. 
We generate the input points for the algorithms with the standard C++ 64-bit 
version of the Mersenne Twistter (MT) 19937 pseudo-random number 
generator~\cite{matsumoto-mt-1998} set to the seed $1973272912$~\cite{prngs-2002}.
In Tables~\ref{tb:fftpdftruntime} and~\ref{tb:pdftnonpoweroftwo} of
Appendix~\ref{app:tables}, we report the statistics and results discussed in 
subsection~\ref{subsec:poweroftwo} and Subsection~\ref{subsec:dftnonpoweroftwo},
respectively.

\subsection{Power of Two DFTs}\label{subsec:poweroftwo}
In this Subsection, we evaluate the performance of FFT and PDFT 
algorithms under power of two number of points, as required by
the FFT algorithm.
In Fig.~\ref{fig:pdftfftruntime},
we plot the runtime of the FFT algorithm 
(employed by OFDM and V-OFDM)
and the multiplierless PDFT algorithm we propose for V-OFDM set 
to two $N/2$-subcarrier vector blocks.
In Fig.~\ref{fig:pdftfftanalyticaltime}, 
we plot the total number of arithmetic instructions
predicted by the theoretical complexity analysis.
The overall number of  arithmetic instructions performed by
the FFT algorithm and the PDFT algorithm are at least $5N\log_2N$~\cite{1386650} 
and $N$ (Subsection~\ref{subsec:multiplierless}), respectively.
The statistics of the runtime are reported in Table~\ref{tb:fftpdftruntime}. 
We report the throughput considering the BPSK 
modulation in which one bit modulates one subcarrier. Thus, 
one can reproduce Fig.~\ref{fig:fftthroughput} and Fig.~\ref{fig:pdftthroughput} just by
multiplying the BPSK-based throughput with the number of bits
achieved by other modulation, e.g., $6$ in the case of 64-QAM.

\begin{figure}[t]
\centering
   \includegraphics[width=3in]{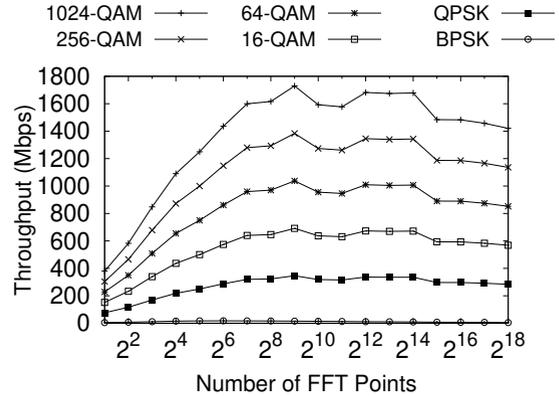}
   \caption{Throughput of PDFT algorithm under different signal constellation mappers.  \label{fig:pdftthroughput}}
\end{figure}

As one can observe in Fig.~\ref{fig:pdftfftruntime}
and Fig.~\ref{fig:pdftfftanalyticaltime}, 
the exponential nature of the FFT complexity
becomes clear after $N=2^{12}=4096$ points. 
Because the FFT algorithm demands $N$ to grow as a power
of two $2^i$ (for some $i>0$), the number of DFT points
must at least double in novel standards that adopt more
subcarriers to improve throughput. Consequently, the
complexity of the FFT algorithm grows accordingly. We
highlight the performance of FFT for the largest number 
of points of different wireless communication standards.
In the case of the IEEE 802.11a~\cite{ieee80211-12}, 
IEEE 802.11ac~\cite{ieee80211ac-2013} and 5G~\cite{3gpp2020a}
{\color{\corcorrecao}physical layer} standards the maximum number of FFT points are 64, 512
and 4096, respectively. {\color{\corcorrecao} Considering the $5N\log_2 N$ 
arithmetic instructions of the Cooley-Tukey algorithm~\cite{1386650}, 
no less than $1920$, $23040$ and $245760$ arithmetic instructions 
must be performed by FFT in those standards, respectively.}
In our simulation, these complexities caused the FFT runtime to grow 
at least one order of magnitude, which corresponded  to 
$3.58$~$\mu s$, $33.97$~$\mu s$ and  $363.8$~$\mu s$, respectively,
as reported in Fig.~\ref{fig:pdftfftruntime}.

The wall-clock runtime of FFT can be improved if FFT is implemented
on dedicate hardware such as {\color{black}application-specific integrated circuits (ASICs)}. However, as
shown in Fig.~\ref{fig:pdftfftanalyticaltime}, the overall number 
of arithmetic instructions remains exponential irrespective of the 
implementation technology. Thus, the FFT complexity represents a 
serious concern for other relevant performance indicators of future 
networks like manufacturing cost, area (device portability) and power 
consumption. 

By contrast, the proposed PDFT algorithm performed 
about two orders of magnitude better than FFT for all scenarios, 
even under the power of two constraint of FFT. Also, the FFT
algorithm nullifies on $N$.
In the simulation, this behavior can be observed by noting
that the FFT throughput reaches the maximum value for $N=2^6$ but
achieves nearly the same value for $N=2^2$ and $N=2^{18}$ (Fig.~\ref{fig:fftthroughput}). 
In turn, the PDFT algorithm keeps nearly the same throughput after 
$N=2^7$ (Fig.~\ref{fig:pdftthroughput}). According to our theoretical analyses, 
this stems from the fact that both the PDFT complexity and the number of
processed bits grows linearly on $N$. Therefore, the PDFT throughput tends to a
non-null constant as $N$ gets arbitrarily large.

\begin{figure}[t]
\centering
        \includegraphics[width=2.5in]{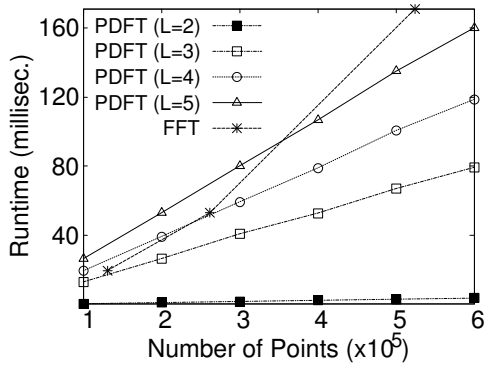}
        \caption{Runtime of FFT and the proposed PDFT 
algorithms for a number of points  $N=1\cdot 10^5, 2\cdot 10^5, \cdots, 6\cdot 10^{5}$. 
For FFT, only the powers of two $2^{17}=131072$, $2^{18}=262144$ and $2^{19}=524288$
are considered.
\label{fig:nonpoweroftworuntime}}
\end{figure}

\subsection{Non Power of Two DFTs}\label{subsec:dftnonpoweroftwo}
In this Subsection, we evaluate the performance of the PDFT 
algorithm under a non power of two number of points $N$.
We vary $N$ through $1\cdot 10^5, 2\cdot 10^{5},\cdots,6\cdot 10^{5}$. 
In Figs.~\ref{fig:nonpoweroftworuntime} and~\ref{fig:nonpoweroftwothroughput}, 
we plot the runtime and
throughput performance of the proposed PDFT algorithm, respectively. 
We vary the number of vector blocks $L=2,3,4,5$ and plot the
performance of the FFT algorithm by setting $N$ to the existing 
powers of two in the interval $[1\cdot 10^5,6\cdot 10^{5}]$, namely 
$2^{17}=131072$, $2^{18}=262144$ and $2^{19}=524288$. 
{\color{\corcorrecao}
PDFT requires the length $N/L$ of each vector block to be an integer.
This requisite is met by all chosen values of $N$ and $L$ except $L=3$.
In this case,  we decrease $N$ by $N \mod 3$ to ensure $N/L$ is an
integer ($x \mod y$ returns the remainder of division of $x$ by $y$). 
Thus, for $L=3$ the values of $N$ $10^5$, $2\cdot 10^5$, $4\cdot 10^5$ 
and $5\cdot 10^5$ are subtracted by $-1$, $-2$, $-1$ and $-2$, respectively.
}
The runtime and throughput of the FFT and PDFT algorithms are
taken from Table~\ref{tb:fftpdftruntime} and Table~\ref{tb:pdftnonpoweroftwo},
respectively. Both tables have the same structure of columns,
as we explained in Subsection~\ref{subsec:poweroftwo}.

\begin{figure}[t]
\centering
      \includegraphics[width=2.5in]{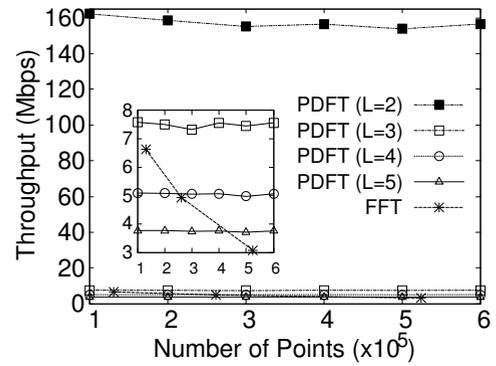}
      \caption{Throughput of FFT and the proposed PDFT 
algorithms for a number of points  $N=1\cdot 10^5, 2\cdot 10^5, \cdots, 6\cdot 10^{5}$. 
For FFT, only the powers of two $2^{17}=131072$, $2^{18}=262144$ and $2^{19}=524288$
are considered.  \label{fig:nonpoweroftwothroughput}}
\end{figure}

As one can see in Fig.~\ref{fig:nonpoweroftworuntime},
the runtime performance of PDFT improves for lower values
of $L$.  The best performance is achieved for $L=2$ in which
PDFT becomes multiplierless and performs $N/2$ $2$-point 
transforms. Although the PDFT performance worsens for 
larger $L$, its complexity remains linear on $N$ for 
all evaluated setups.
This happens because PDFT exploits the parameterization 
technique to perform $\mathcal{M}=N/L$ independent $L$-point DFTs.
By setting $L$ to $\Theta(1)$, each independent DFT
takes $L^2=\Theta(1)$ time complexity, yielding a total of
$(N/L)\cdot \Theta(1)=O(N)$ complexity.

The lowest complexity of PDFT (achieved with $L=2$) translates into the
fastest throughput among all algorithms, which is about
two orders of magnitude above all other algorithms,
as one can see in Fig.~\ref{fig:nonpoweroftwothroughput}
where throughput is plotted considering one bit per point
(i.e., BPSK modulation).
Despite that, PDFT sustains a non-null throughput for
all values of $L$ whereas FFT nullifies as $N$ grows.

The throughput nullification happens because the complexity 
grows asymptotically faster than the number of modulated bits 
as $N$ grows. In the case of PDFT, the throughput remains
constant as $N$ grows even considering the fact that complexity
grows too.
Besides, because PDFT relies on the straightforward DFT algorithm rather 
than FFT, the number of points can grow in an unitary manner rather 
than doubling. Considering the range of the experiment $[1\cdot 10^5, \cdots,6\cdot 10^{5}]$ 
for example, there exist $250001$, $166667$, $125001$ 
and $100001$ possible setup choices of $N$ for PDFT under $L=2$, 
$L=3$, $L=4$ and $L=5$, respectively. By contrast, there are
only three choices of $N$ for the FFT algorithm in the same range,
 they are $2^{17}=131072$, $2^{18}=262144$ and $2^{19}=524288$. 
\emph{This can provide standardization bodies with more setup 
choices for future multicarrier wireless communication standards}.

\section{Conclusion and Future Work}\label{sec:conclusionpdft}
In this work, we demonstrated that the fast Fourier transform
(FFT) algorithm can be too complex for the post-5G generation 
of broadband waveforms. 
The constraint that the number of points $N$ must grow as a power of two 
$2^i$ (for some $i>0$) along with the unprecedented growth in the number of subcarriers, 
cause FFT to run in the exponential complexity $O(2^i\cdot i)$. 
Also, because this complexity grows faster than the number of modulated bits,
the FFT throughput nullifies as $N$ grows.
We generalized this result to show that the throughput of any DFT algorithm
nullifies on $N$ unless the lower bound complexity of the DFT problem
verifies as $\Omega(N)$, which is an open conjecture in computer science. 

{\color{\corcorrecao}
To overcome the scalability limitations of FFT,
we consider the alternative frequency-time transform formulation
of vector OFDM (V-OFDM)~\cite{vofdm-2001},  a waveform that replaces an $N$-point 
FFT by $N/L$ ($L>0$) smaller FFTs to mitigate 
the cyclic prefix overhead of OFDM.
In this sense, we replace FFT by DFT to relax the power of two constraint on $N$ and
to provide V-OFDM with flexible numerology (e.g. $L=3$, $N=156$). Besides, by 
parameterizing $L$ to $\Theta(1)$, we identify that the resulting DFT-based
solution (we refer to as parameterized DFT, PDFT) can run linearly 
on $N$ rather than exponentially on $i$.
}

We also formulate what we refer to as the sampling-complexity 
(Nyquist-Fourier) trade-off, which stems from the fact that
{\color{\corcorrecao} the $N$-point DFT algorithm operates on a batch 
of $N$ samples but its associated sampler operates 
on a sample by sample basis.  As $N$ grows, the Nyquist inter-sample 
time interval demanded by the sampler decreases but the DFT complexity 
to compute all samples increases. We demonstrate that the asymptotic solution 
of the trade-off would require $\Theta(1)$ DFT algorithms. Since
DFT algorithms grows linearly on $N$ at best, i.e., $\Omega(N)$, no DFT 
algorithm can meet the Nyquist deadline as $N$ grows. However,
we identify that the trade-off can be countered in practice if
V-OFDM is set to two $N/2$-subcarrier vector blocks (i.e.,  $L=2$).
In that case, the transform simplifies to $N/2$ complex sums that
can be performed in parallel both at the transmitter and receiver.
Thus, the $N$-point DFT becomes multiplierless and
each sample that feeds the DAC/ADC comes only from two -- rather than $N$ --
other samples. We believe these results turn V-OFDM into a competitive 
candidate waveform for future broadband wireless networks.
}

In future work, the PDFT-based V-OFDM implementation can be 
coupled to an analog Terahertz radio {\color{\corcorrecao} and } the optimal parameterization 
for the PDFT complexity can be identified {\color{\corcorrecao}under} different channel propagation 
conditions.  The joint throughput-complexity asymptotic limit
of detection algorithms can be investigated as well. In this sense,
one may concern about enhancing the analytic model employed 
in this work to capture the natural trade-off between complexity and bit 
error rate in algorithms such as signal detection and error correction codes.
Also, the impact of sampling on the SC analysis of DFT algorithms can be investigated 
under other conditions not considered in this work such as variable symbol duration and
sub-Nyquist samplers~\cite{qaisar-compressive-2013},~\cite{mousavi-cssurvey-2019}.

\appendices
\section{Simulation Results}\label{app:tables}
In  Tables~\ref{tb:fftpdftruntime} and~\ref{tb:pdftnonpoweroftwo}, we 
report the statistics of each simulation. Both tables report the number
of points, the algorithm, the runtime in $\mu s$, the throughput, the
runtime's half-width of the confidence interval and the runtime's variance,
respectively. No experiment demanded more than 70000 repetitions and an
average of about 500 samples were discarded due to the transient stage.

\begin{table}[t]
  \caption{Runtime and throughput of PDFT (V-OFDM, $L=2$) and FFT (V-OFDM) algorithms
under BPSK modulation and power of two number of points. 
$\delta$ is the half-width of the confidence interval with $95\%$ of confidence and relative error below $0.05$.\label{tb:fftpdftruntime}}
\footnotesize
\centering
\begin{tabular}{|l|l|l|l|l|l|}
\hline
\multicolumn{1}{|c|}{N} & \multicolumn{1}{c|}{\textbf{
\begin{tabular}[c]
  {@{}c@{}}Algo-\\ rithm\end{tabular}}} & 
  \multicolumn{1}{c|}{\textbf{\begin{tabular}[c]{@{}c@{}}Runtime \\ $\mu s$\end{tabular}}} & 
  \multicolumn{1}{c|}{\textbf{\begin{tabular}[c]{@{}c@{}}Throughput \\ (Mbps)\end{tabular}}} & 
  \multicolumn{1}{c|}{\textbf{$\pm\delta$}~$\mu s$} & \textbf{Variance} \\ \hline
\multirow{2}{*}{$2^{1}$}  
 & PDFT & \textbf{0.05} & \textbf{38.02} & 0.001 & $<0.001$ \\ \cline{2-6}
 & FFT & \textbf{0.42} & \textbf{4.71} & 0.01 & $<0.001$ \\ \cline{2-6}
\hline
\multirow{2}{*}{$2^{2}$}  
& PDFT & \textbf{0.07} & \textbf{58.31} & 0.001 & $<0.001$ \\ \cline{2-6}
 & FFT & \textbf{0.54} & \textbf{7.35} & 0.03 & $<0.001$ \\ \cline{2-6}
 \hline
\multirow{3}{*}{$2^{3}$} 
 & PDFT & \textbf{0.09} & \textbf{84.84} & 0.001 & $<0.001$ \\ \cline{2-6}
 & FFT & \textbf{0.72} & \textbf{11.06} & 0.03 & $<0.001$ \\ \cline{2-6}
 \hline
\multirow{3}{*}{$2^{4}$}  
& PDFT & \textbf{0.15} & \textbf{109.07} & 0.001 & $<0.001$ \\ \cline{2-6}
 & FFT & \textbf{1.06} & \textbf{15.13} & 0.02 & $<0.001$ \\ \cline{2-6}
 \hline
\multirow{3}{*}{$2^{5}$}  
& PDFT & \textbf{0.26} & \textbf{125.05} & 0.01 & $<0.001$ \\ \cline{2-6}
 & FFT & \textbf{1.89} & \textbf{16.96} & 0.09 & $<0.001$ \\ \cline{2-6}
 \hline
\multirow{3}{*}{$2^{6}$} 
 & PDFT & \textbf{0.45} & \textbf{143.59} & 0.01 & $<0.001$ \\ \cline{2-6}
 & FFT & \textbf{3.58} & \textbf{17.86} & 0.08 & $<0.001$ \\ \cline{2-6}
 \hline
\multirow{3}{*}{$2^{7}$}  
& PDFT & \textbf{0.80} & \textbf{159.96} & 0.01 & $<0.001$ \\ \cline{2-6}
 & FFT & \textbf{7.54} & \textbf{16.97} & 0.37 & 0.02 \\ \cline{2-6}
 \hline
\multirow{3}{*}{$2^{8}$}  
& PDFT & \textbf{1.58} & \textbf{161.66} & 0.08 & $<0.001$ \\ \cline{2-6}
 & FFT & \textbf{15.65} & \textbf{16.36} & 0.51 & 0.05 \\ \cline{2-6}
 \hline
\multirow{3}{*}{$2^{9}$}  
& PDFT & \textbf{2.96} & \textbf{172.94} & 0.01 & $<0.001$ \\ \cline{2-6}
 & FFT & \textbf{33.97} & \textbf{15.07} & 1.26 & 0.29 \\ \cline{2-6}
 \hline
\multirow{3}{*}{$2^{10}$}  
& PDFT & \textbf{6.43} & \textbf{159.24} & 0.30 & 0.02 \\ \cline{2-6}
 & FFT & \textbf{73.58} & \textbf{13.92} & 2.79 & 1.39 \\ \cline{2-6}
 \hline
\multirow{3}{*}{$2^{11}$}  
& PDFT & \textbf{12.99} & \textbf{157.71} & 0.35 & 0.02 \\ \cline{2-6}
 & FFT & \textbf{158.28} & \textbf{12.94} & 0.55 & 0.05 \\ \cline{2-6}
 \hline
\multirow{3}{*}{$2^{12}$}  
& PDFT & \textbf{24.35} & \textbf{168.22} & 0.16 & $<0.001$ \\ \cline{2-6}
 & FFT & \textbf{362.43} & \textbf{11.30} & 2.82 & 1.42 \\ \cline{2-6}
 \hline
\multirow{3}{*}{$2^{13}$}  
& PDFT & \textbf{48.93} & \textbf{167.43} & 0.46 & 0.04 \\ \cline{2-6}
 & FFT & \textbf{790.96} & \textbf{10.36} & 6.01 & 6.45 \\ \cline{2-6}
 \hline
\multirow{3}{*}{$2^{14}$}  
& PDFT & \textbf{97.60} & \textbf{167.87} & 0.18 & 0.01 \\ \cline{2-6}
 & FFT & \textbf{1786.68} & \textbf{9.17} & 3.13 & 1.76 \\ \cline{2-6}
 \hline
\multirow{3}{*}{$2^{15}$}  
& PDFT & \textbf{220.81} & \textbf{148.40} & 0.13 & $<0.001$ \\ \cline{2-6}
 & FFT & \textbf{4193.85} & \textbf{7.81} & 3.55 & 2.25 \\ \cline{2-6}
 \hline
\multirow{3}{*}{$2^{16}$}  
& PDFT & \textbf{442.09} & \textbf{148.24} & 0.38 & 0.03 \\ \cline{2-6}
 & FFT & \textbf{9154.79} & \textbf{7.16} & 60.18 & 647.40 \\ \cline{2-6}
 \hline
\multirow{3}{*}{$2^{17}$}  
& PDFT & \textbf{899.34} & \textbf{145.74} & 6.74 & 8.13\\ \cline{2-6}
 & FFT & \textbf{19805.5} & \textbf{6.62} & 54.58 & 532.5  \\ \cline{2-6}
 \hline

\multirow{3}{*}{$2^{18}$}  
& PDFT & \textbf{1845.65} & \textbf{142.03} &  11.34 & 23.0 \\ \cline{2-6}
 & FFT & \textbf{54415.6} & \textbf{4.82} & 245.92 & 1482 \\ \cline{2-6}
 \hline
\end{tabular}  
\end{table} 

\begin{table}[t]
  \caption{Runtime and throughput of PDFT algorithm 
under BPSK modulation and non power of two number of points. 
$\delta$ is the half-width of the confidence interval with $95\%$ of confidence and relative error below $0.05$.\label{tb:pdftnonpoweroftwo}}
\centering
\footnotesize
\begin{tabular}{|l|l|l|l|l|l|}
\hline
\multicolumn{1}{|c|}{N} & \multicolumn{1}{c|}{\textbf{\begin{tabular}[c]{@{}c@{}}PDFT\\ setup\end{tabular}}} & \multicolumn{1}{c|}{\textbf{\begin{tabular}[c]{@{}c@{}}Runtime \\ $\mu s$\end{tabular}}} & \multicolumn{1}{c|}{\textbf{\begin{tabular}[c]{@{}c@{}}Throughput \\ (Mbps)\end{tabular}}} & \multicolumn{1}{c|}{\textbf{$\pm\delta$}~$\mu s$} & \textbf{Variance} \\ \hline
\multirow{4}{*}{100000}  & L=2 & \textbf{616.27} & \textbf{162.27} & 2.55 & 1.16 \\ \cline{2-6}
 & L=3 & \textbf{13194.70} & \textbf{7.58} & 18.36 & 60.25 \\ \cline{2-6}
 & L=4 & \textbf{19661.60} & \textbf{5.09} & 28.12 & 141.31 \\ \cline{2-6}
 & L=5 & \textbf{26566.00} & \textbf{3.76} & 130.42 & 3040.57 \\ \cline{2-6}\hline
\multirow{4}{*}{200000}  & L=2 & \textbf{1260.86} & \textbf{158.62} & 1.06 & 0.20 \\ \cline{2-6}
 & L=3 & \textbf{26664.20} & \textbf{7.50} & 23.73 & 100.66 \\ \cline{2-6}
 & L=4 & \textbf{39414.40} & \textbf{5.07} & 37.15 & 246.64 \\ \cline{2-6}
 & L=5 & \textbf{53084.40} & \textbf{3.77} & 39.02 & 272.09 \\ \cline{2-6}\hline
\multirow{4}{*}{300000}  & L=2 & \textbf{1933.58} & \textbf{155.15} & 7.33 & 9.60 \\ \cline{2-6}
 & L=3 & \textbf{40969.50} & \textbf{7.32} & 33.04 & 195.16 \\ \cline{2-6}
 & L=4 & \textbf{59452.20} & \textbf{5.05} & 595.27 & 63339.50 \\ \cline{2-6}
 & L=5 & \textbf{80230.30} & \textbf{3.74} & 57.35 & 587.81 \\ \cline{2-6}\hline
\multirow{4}{*}{400000}  & L=2 & \textbf{2556.17} & \textbf{156.48} & 5.00 & 4.46 \\ \cline{2-6}
 & L=3 & \textbf{52958.40} & \textbf{7.55} & 26.59 & 126.36 \\ \cline{2-6}
 & L=4 & \textbf{79045.20} & \textbf{5.06} & 43.66 & 340.75 \\ \cline{2-6}
 & L=5 & \textbf{106685.00} & \textbf{3.75} & 136.26 & 3318.80 \\ \cline{2-6}\hline
\multirow{4}{*}{500000}  & L=2 & \textbf{3250.60} & \textbf{153.82} & 2.05 & 0.75 \\ \cline{2-6}
 & L=3 & \textbf{67125.20} & \textbf{7.45} & 409.26 & 29939.60 \\ \cline{2-6}
 & L=4 & \textbf{100663.00} & \textbf{4.97} & 807.14 & 116450.00 \\ \cline{2-6}
 & L=5 & \textbf{134902.00} & \textbf{3.71} & 969.38 & 167969.00 \\ \cline{2-6}\hline
\multirow{4}{*}{600000}  & L=2 & \textbf{3832.85} & \textbf{156.54} & 3.06 & 1.68 \\ \cline{2-6}
 & L=3 & \textbf{79383.40} & \textbf{7.56} & 29.29 & 153.30 \\ \cline{2-6}
 & L=4 & \textbf{118633.00} & \textbf{5.06} & 57.44 & 589.81 \\ \cline{2-6}
 & L=5 & \textbf{159963.00} & \textbf{3.75} & 294.60 & 15513.50 \\ \cline{2-6}\hline
\end{tabular}  
\end{table}

\bibliographystyle{IEEEtran}
{\color{black}
\bibliography{refs}  %
}
\begin{IEEEbiography}[{\includegraphics[width=1in,height=1.25in,clip,keepaspectratio]{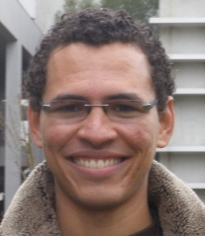}}]{Saulo Queiroz} 
is a professor at the Department of Computer Science of the Federal University of Technology (UTFPR) in Brazil. He completed his Ph.D. with distinction and honor at the University of Coimbra (Portugal). During his academic graduation, he has contributed to open source projects in the field of networking, having participated in initiatives such as as Google Summer of Code. Over the last decade, he has lectured disciplines on computer science such as design and analysis of algorithms, data structures and communication signal processing. His current research interest comprises networking and signal processing for wireless communications.
 \end{IEEEbiography}
\begin{IEEEbiography}[{\includegraphics[width=1in,height=1.25in,clip,keepaspectratio]{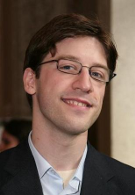}}]{Jo\~ao P. Vilela} 
is an assistant professor at the Department of Computer Science of the University of Porto, Portugal, and a senior researcher at CISUC and INESC TEC. He was a professor at the University of Coimbra after receiving his Ph.D. in Computer Science from the University of Porto in 2011, and a visiting researcher at Georgia Tech and MIT, USA. In recent years, Dr. Vilela has been coordinator and team member of several national, bilateral, and European-funded projects in security and  privacy. His main research interests are in security and privacy of computer and communication systems, with applications such as wireless networks, Internet of Things and mobile devices. Specific research topics include wireless physical-layer security, security of next-generation networks, privacy-preserving data mining, location privacy and automated privacy protection.
\end{IEEEbiography}
\begin{IEEEbiography}[{\includegraphics[width=1in,height=1.25in,clip,keepaspectratio]{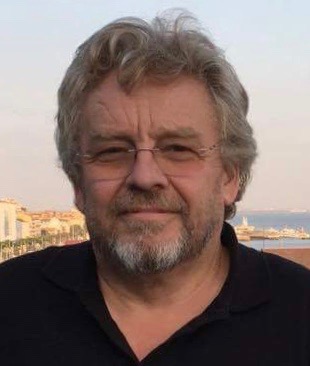}}]{Edmundo Monteiro}
is currently a Full Professor with the University of Coimbra, Portugal.
He has more than 30 years of research experience in the field of computer communications,
wireless networks, quality of service and experience, network and service management, and
computer and network security. He participated in many Portuguese, European, and international
research projects and initiatives. His publication
list includes over 200 publications in journals,
books, and international refereed conferences. He has co-authored nine
international patents. He is a member of the Editorial Board of Wireless
Networks (Springer) journal and is involved in the organization of many
national and international conferences and workshops. He is also a Senior
Member of the IEEE Communications Society and the ACM Special Interest
Group on Communications. He is also a Portuguese Representative in IFIP
TC6 (Communication Systems).
\end{IEEEbiography}

\EOD
\end{document}